\documentclass[12pt]{article}

\usepackage{amsmath,amssymb,amsfonts}
\usepackage{amsthm}
\usepackage{graphicx}
\usepackage{booktabs}
\usepackage{array}
\usepackage{tabularx}
\usepackage{url}
\usepackage{float}
\usepackage{placeins}
\usepackage{geometry}
\usepackage{hyperref}
\usepackage{adjustbox}

\newtheorem{definition}{Definition}
\newtheorem{lemma}{Lemma}
\newtheorem{theorem}{Theorem}
\newtheorem{corollary}{Corollary}
\newtheorem{example}{Example}
\newtheorem{remark}{Remark}

\newcommand{\Ecc}{\operatorname{ecc}}

\title{Multi-Orientation Edge-Minimum Repair for Non-Redundant Fault-Tolerant Broadcasting in Dense Eisenstein--Jacobi Networks}

\author{Bader A. Albader\\
\small Department of Computer Science, Faculty of Science, Kuwait University, Kuwait\\
\small \texttt{albader@cs.ku.edu.kw}}

\date{}

\begin{document}

\maketitle

\begin{abstract}
Dense Eisenstein--Jacobi (EJ) networks are degree-six algebraic interconnection networks whose finite quotient geometry is naturally represented by a hexagonal axial-coordinate ball. This paper studies non-redundant one-to-all broadcast repair in the dense EJ network generated by $\alpha=(t+1)+t\omega$, where $t$ is the network diameter. We propose EJ-MOEM, a multi-orientation edge-minimum repair method that evaluates a constant-size family of hexagonal broadcast-tree orientations, selects a fault-aware candidate, contracts the fault-pruned tree into healthy components, and reconnects these components using external component-crossing repair edges. The resulting structure is a rooted spanning tree of the healthy subgraph: every healthy node receives the message exactly once, no faulty node is used, and the original healthy tree components are preserved. We prove that, for a chosen orientation whose fault-pruned component graph is connected, exactly $c-1$ external repair edges are necessary and sufficient, where $c$ is the number of healthy components. We also prove a depth-certificate theorem for EJ coordinate-reduction trees: every one-fault placement admits a repair of depth at most $t+1$, and every two-fault placement admits a repair of depth at most $t+2$. The proof uses the three-strip representation of EJ hexagons, a sector-suffix attachment lemma, a non-adjacent-sector separation lemma, and a six-direction shielding classification for paired cuts. Extended validation includes exhaustive one- and two-fault enumeration for $t=2,\ldots,12,14,16,18$ (up to $N=1027$ and 525,825 two-fault placements at $t=18$), structured theorem-critical tests through $t=30$, and large random tests through $t=200$, all with 100\% success and no violation of the theorem.
\end{abstract}

\noindent\textbf{Keywords:} Eisenstein--Jacobi networks, hexagonal networks, Cayley graphs, fault-tolerant broadcasting, non-redundant communication, local repair, component repair, interconnection networks, edge-minimum repair, parallel communication.

\section{Introduction}
One-to-all broadcasting is a basic collective communication operation in parallel and distributed systems. In a non-redundant broadcast, each healthy processor receives the message exactly once. This avoids duplicate traffic, avoids unnecessary contention, and gives a clean correctness condition: the communication structure must be a rooted spanning tree of the healthy subgraph. The challenge is that a fault-free broadcast tree can be fragmented by a small number of faulty processors.

Algebraic interconnection networks provide compact symmetric graph models for such problems. Dense Eisenstein--Jacobi (EJ) networks are especially attractive because their degree is six, their distance balls are discrete hexagons, and their coordinate model is compatible with simple modular addressing. The dense EJ family considered in this paper has diameter $t$ and order
\begin{equation}
N=3t^2+3t+1.
\end{equation}
The natural fault-free source-centered broadcast tree reaches every node in at most $t$ steps. After faults occur, however, a fixed broadcast tree can split into multiple healthy components. A global breadth-first rebuild can restore reachability, but it may replace a large number of parent edges. The objective here is different: preserve the healthy portions of the damaged broadcast tree and add only the minimum number of external component-crossing repair edges needed to reconnect them.

This paper introduces \emph{EJ-MOEM}: multi-orientation edge-minimum repair for dense EJ broadcast trees. The method constructs a small family of deterministic EJ coordinate-reduction broadcast trees. For each orientation, the faulty nodes are deleted, the remaining forest is contracted into components, and a component-level repair is performed. The selected candidate is the valid repaired tree with best lexicographic score, primarily minimizing repaired depth and then repair count.

The contributions are as follows.
\begin{itemize}
\item We formulate non-redundant local broadcast repair for dense EJ networks under one and two processor faults.
\item We define a 15-orientation EJ broadcast-tree family based on six-direction coordinate reduction in the hexagonal axial ball.
\item We prove the component-minimum repair theorem: for a selected orientation with $c$ healthy components and connected component graph, exactly $c-1$ external component-crossing repair edges are necessary and sufficient.
\item We prove an EJ depth-certificate theorem: one fault is repaired within depth $t+1$, and two faults are repaired within depth $t+2$.
\item We give worked examples illustrating the coordinate model, orientation rules, one-fault repair, two-fault repair, and the component-contraction view.
\item We validate the proof-derived behavior by extended exhaustive, structured, and large random tests, including exhaustive two-fault enumeration up to $N=1027$ and random tests up to $N=120601$.
\end{itemize}

\section{Related Work}
Akers and Krishnamurthy introduced a group-theoretic model for symmetric interconnection networks \cite{Akers1989}. Standard references on parallel algorithms and interconnection networks discuss meshes, tori, hypercubes, collective communication, and routing primitives \cite{Leighton1992,DallyTowles2004}. Fault-tolerant communication algorithms in toroidal networks were studied in \cite{AlMohammadBose1999}.

Gaussian and Eisenstein--Jacobi networks were developed as algebraic low-degree interconnection topologies with compact routing descriptions. Martinez and coauthors modeled toroidal and hexagonal networks using Gaussian and EJ algebraic structures \cite{MartinezGaussian2008,MartinezDenseGaussian2006,MartinezEJ2008}. Flahive and Bose studied the topology of Gaussian and EJ interconnection networks \cite{FlahiveBose2010}, and later classified resource placements in Gaussian and EJ networks \cite{FlahiveBose2013}. Communication algorithms in hexagonal mesh networks and Hamiltonian structures in Gaussian networks were studied in \cite{AlbaderBoseFlahive2012,AlbaderBose2016}. Independent spanning trees in EJ networks provide multiple pre-built broadcast structures for fault-tolerant delivery \cite{HussainAboElFotohAlBdaiwi2021}. EJ-MOEM instead takes a single selected tree after the fault set is known and minimizes the number of new component-crossing edges needed to repair it.

Post-fault reconfiguration in structured networks has been studied in other topologies. Faulty hypercube broadcasting and routing algorithms use local connectivity or local safety information \cite{LeeHayes1988,LiuSong2005}. Fault-tolerant adaptive routing in dragonfly networks preserves communication under link or router failures \cite{XiangLiFu2019}.

\section{Dense EJ Network Model}
Let
\begin{equation}
\omega=\frac{-1+i\sqrt{3}}{2}.
\end{equation}
We represent an Eisenstein integer $x+y\omega$ by the axial coordinate pair $(x,y)\in\mathbb{Z}^2$. The six unit EJ directions are
\begin{equation}
\mathcal{D}=\{(1,0),(0,1),(1,-1),(-1,0),(0,-1),(-1,1)\}.
\end{equation}
For readability, these directions are denoted by
\begin{equation}
E,N,SE,W,S,NW,
\end{equation}
respectively. The EJ distance from the origin is
\begin{equation}
\rho(x,y)=d_{EJ}((0,0),(x,y))=\max\{|x|,|y|,|x+y|\}.
\end{equation}
The radius-$t$ EJ ball centered at the origin is the discrete hexagon
\begin{equation}
H_t=\{(x,y): |x|\le t,\ |y|\le t,\ |x+y|\le t\}.
\end{equation}

\begin{figure}[H]
\centering
\includegraphics[width=0.65\linewidth]{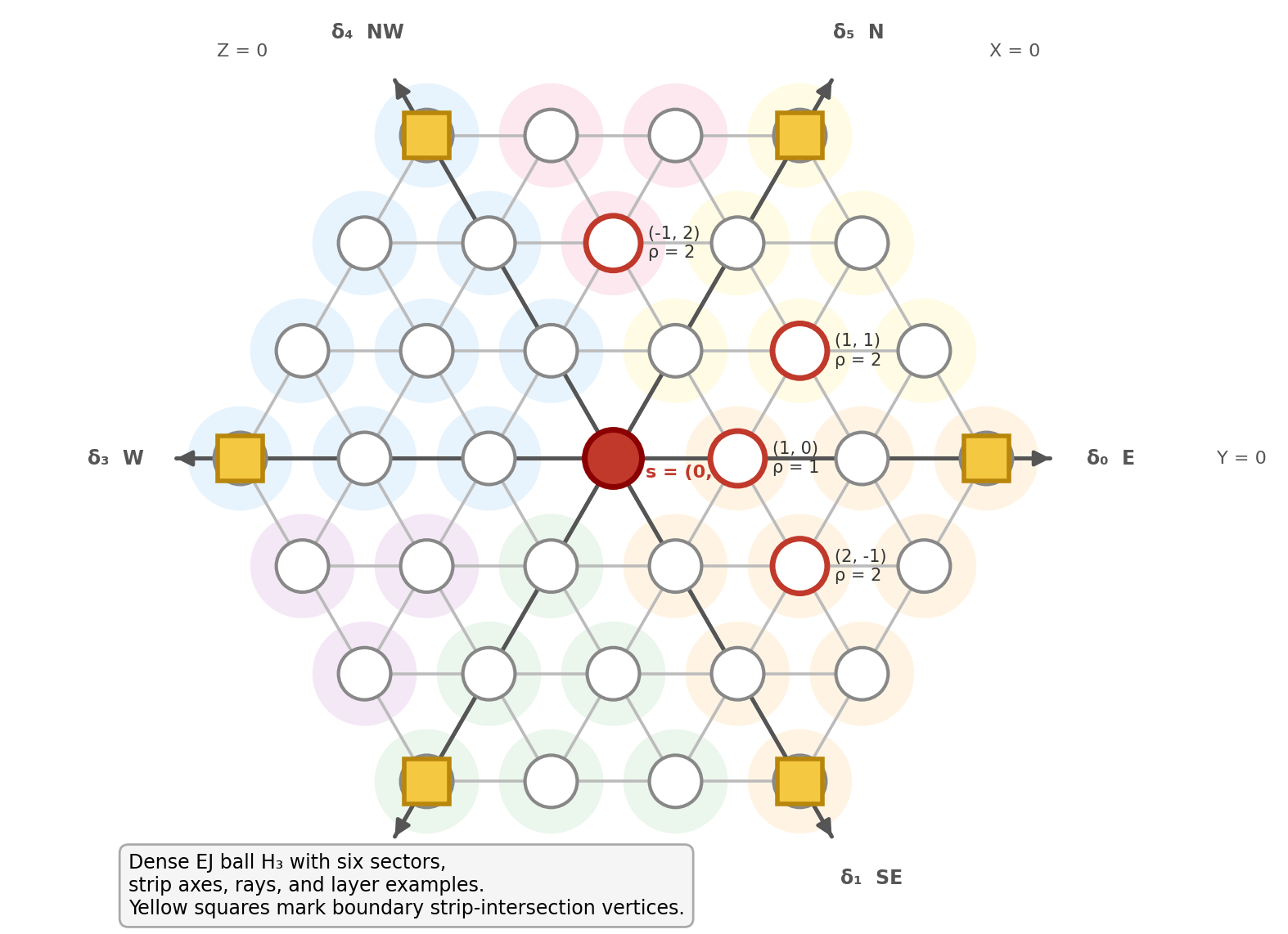}
\caption{EJ hexagonal ball $H_3$ with six sectors, strip axes, unit rays, and example layer values. Yellow squares mark boundary strip-intersection vertices where two strip coordinates are simultaneously tight.}
\label{fig:hex-sector}
\end{figure}

It contains
\begin{equation}
|H_t|=3t^2+3t+1
\end{equation}
vertices. The dense EJ network considered here is generated by
\begin{equation}
\alpha=(t+1)+t\omega,
\end{equation}
so its order is
\begin{equation}
N=3t^2+3t+1.
\end{equation}
The integer label associated with an axial coordinate is
\begin{equation}
\phi(x+y\omega)\equiv t x+(2t+1)y \pmod{N}. \label{eq:label}
\end{equation}

\begin{example}[Dense EJ coordinates for $t=3$]
For $t=3$, the network has $N=37$ nodes. The vertex $(1,0)$ has label $\phi(1,0)=3$, while $(0,1)$ has label $\phi(0,1)=7$ under the convention in \eqref{eq:label}.
\end{example}

\subsection{Fault-Free Coordinate-Reduction Broadcast Trees}
Let the source be $s=(0,0)$. A coordinate-reduction orientation chooses, for every non-source vertex $v\in H_t$, one neighbor $p(v)$ satisfying
\begin{equation}
\rho(p(v))=\rho(v)-1.
\end{equation}

\begin{lemma}[Fault-free EJ orientation tree]
For any parent rule that assigns to every non-source vertex $v\in H_t$ a neighbor $p(v)$ with $\rho(p(v))=\rho(v)-1$, the induced parent relation is a spanning tree rooted at $s$ with depth at most $t$.
\end{lemma}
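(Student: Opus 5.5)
The plan is to use $\rho$ as a potential function that strictly decreases along parent edges. This gives acyclicity, reachability of the source, and the depth bound all at once.

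\textbf{Step 1: every non-source vertex has a valid parent inside $H_t$.} First I will check that the hypothesis can always be met. For $v=(x,y)\neq(0,0)$ in $H_t$ with $\rho(v)=k\ge1$, I will exhibit a unit direction $d\in\mathcal{D}$ with $\rho(v+d)=k-1$; this shows the parent rule is not vacuous. By the sixfold symmetry of $H_t$ it suffices to take $v$ in one sector, say $x\ge0$, $y\ge0$, so that $\rho=x+y$. Stepping by $(-1,0)$ if $x>0$, or by $(0,-1)$ if $x=0$, lowers $x+y$ by one and keeps both coordinates nonnegative. The other sectors follow by symmetry, or by the analogous sign-case check. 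Since $\rho(p(v))<\rho(v)\le t$, the parent $p(v)$ again lies in $H_t$. It is also a neighbor of $v$, because it differs from $v$ by an element of $\mathcal{D}$. The same reasoning applies to whatever parent rule is given.

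\textbf{Step 2: every vertex reaches the source in exactly $\rho(v)$ steps.} Starting from any $v$, I will iterate $p$ to get the chain $v, p(v), p(p(v)),\dots$. Along this chain $\rho$ drops by exactly one at each step and is a nonnegative integer. So after exactly $\rho(v)$ steps the chain reaches a vertex with $\rho=0$, and the only such vertex is $s=(0,0)$. Hence every vertex is connected to $s$ through parent edges, and its depth equals $\rho(v)\le t$.

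\textbf{Step 3: the parent edges form a spanning tree rooted at $s$.} The subgraph $T$ has vertex set $H_t$ and edges $\{v,p(v)\}$ for $v\neq s$. Each non-source vertex contributes exactly one edge. These edges are distinct: an edge $\{u,v\}$ arising both as $\{u,p(u)\}$ and $\{v,p(v)\}$ would force $\rho(u)<\rho(v)<\rho(u)$, which is impossible. So $T$ has $|H_t|-1$ edges. By Step 2, $T$ is connected. A connected graph on $n$ vertices with $n-1$ edges is a tree, so $T$ is a spanning tree. Alternatively, a cycle in $T$ is ruled out because its vertex of maximum $\rho$ would need two parents.

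For the rooting, I will orient each edge from $p(v)$ to $v$. Then each non-source vertex has in-degree one and $s$ has in-degree zero, which gives a rooted tree at $s$ whose depth function is $\rho$, bounded by $t$. The only step that needs any care is the existence check in Step 1, and it is a short case analysis. The rest is the standard ``strictly decreasing potential'' argument.
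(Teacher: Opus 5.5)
Your proof is correct and follows the same route as the paper: $\rho$ drops by exactly one along each parent edge, which rules out cycles, forces every parent chain to end at the unique layer-zero vertex $s$, and bounds the depth by $t$. Step 1 (existence of an inward neighbor) is not needed, since the lemma takes the parent rule as given; your edge count in Step 3 simply spells out the tree structure that the paper asserts directly.
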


\begin{proof}
Every non-source vertex has exactly one parent. Along each parent edge, the layer $\rho$ decreases by one. Therefore directed cycles are impossible and repeated parent application reaches the unique layer-zero vertex. Since every vertex of $H_t$ has layer at most $t$, the resulting tree has depth at most $t$.
\end{proof}

\begin{example}[Priority-based parent choice]
Let $t=4$ and consider $v=(2,1)$. Then $\rho(v)=\max\{2,1,3\}=3$. The neighbors $(1,1)$ and $(2,0)$ have layer $2$, so both are valid parents. An orientation priority list decides which inward neighbor is selected.
\end{example}

\subsection{Orientation Family}
The six directions are indexed in cyclic order as
\begin{align}
\delta_0&=E, & \delta_1&=SE, & \delta_2&=S,\nonumber\\
\delta_3&=W, & \delta_4&=NW, & \delta_5&=N.
\end{align}
with all indices taken modulo six.

EJ-MOEM uses a constant-size family $\Theta$ of 15 priority orientations:
\begin{align}
C_i&=(\delta_i,\delta_{i+1},\delta_{i+2},\delta_{i+3},\delta_{i+4},\delta_{i+5}), &&0\le i<6,\nonumber\\
R_i&=(\delta_i,\delta_{i-1},\delta_{i-2},\delta_{i-3},\delta_{i-4},\delta_{i-5}), &&0\le i<6,\nonumber\\
A_i&=(\delta_i,\delta_{i+3},\delta_{i+1},\delta_{i+4},\delta_{i+2},\delta_{i+5}), &&0\le i<3. \label{eq:orientationfamily}
\end{align}

\begin{table}[H]
\centering
\scriptsize
\caption{Explicit EJ-MOEM orientation family. Directions are indexed cyclically by $\delta_0=E,\delta_1=SE,\delta_2=S,\delta_3=W,\delta_4=NW,\delta_5=N$.}
\label{tab:orientations}
\begin{tabular}{ll}
\toprule
Orientation & Priority rule \\
\midrule
$C_0$ & $\delta_0,\delta_1,\delta_2,\delta_3,\delta_4,\delta_5$ \\
$C_1$ & $\delta_1,\delta_2,\delta_3,\delta_4,\delta_5,\delta_0$ \\
$C_2$ & $\delta_2,\delta_3,\delta_4,\delta_5,\delta_0,\delta_1$ \\
$C_3$ & $\delta_3,\delta_4,\delta_5,\delta_0,\delta_1,\delta_2$ \\
$C_4$ & $\delta_4,\delta_5,\delta_0,\delta_1,\delta_2,\delta_3$ \\
$C_5$ & $\delta_5,\delta_0,\delta_1,\delta_2,\delta_3,\delta_4$ \\
$R_0$ & $\delta_0,\delta_5,\delta_4,\delta_3,\delta_2,\delta_1$ \\
$R_1$ & $\delta_1,\delta_0,\delta_5,\delta_4,\delta_3,\delta_2$ \\
$R_2$ & $\delta_2,\delta_1,\delta_0,\delta_5,\delta_4,\delta_3$ \\
$R_3$ & $\delta_3,\delta_2,\delta_1,\delta_0,\delta_5,\delta_4$ \\
$R_4$ & $\delta_4,\delta_3,\delta_2,\delta_1,\delta_0,\delta_5$ \\
$R_5$ & $\delta_5,\delta_4,\delta_3,\delta_2,\delta_1,\delta_0$ \\
$A_0$ & $\delta_0,\delta_3,\delta_1,\delta_4,\delta_2,\delta_5$ \\
$A_1$ & $\delta_1,\delta_4,\delta_2,\delta_5,\delta_3,\delta_0$ \\
$A_2$ & $\delta_2,\delta_5,\delta_3,\delta_0,\delta_4,\delta_1$ \\
\bottomrule
\end{tabular}
\end{table}

\section{EJ-MOEM: Multi-Orientation Edge-Minimum Repair}
Let $T_\theta$ be the broadcast tree induced by orientation $\theta\in\Theta$. Let $F$ be a set of faulty vertices with $s\notin F$. Removing the faulty vertices and their incident tree edges gives a forest
\begin{equation}
T_\theta-F=C_1\cup C_2\cup\cdots\cup C_c.
\end{equation}
Let $C_s$ denote the component containing the source.

\begin{definition}[Component graph]
The component graph $\mathcal{C}_\theta$ has one vertex for each component $C_j$ of $T_\theta-F$. Two component vertices are adjacent in $\mathcal{C}_\theta$ if there exists a healthy EJ graph edge joining the corresponding tree components.
\end{definition}

\subsection{Certified Edge-Minimum Repair for One Orientation}
For a crossing edge $(a,b)$ from the already repaired side to an unrepaired component $C$, define the predicted attachment value
\begin{equation}
D(a,b,C)=d_{T^{\mathrm{cur}}}(s,a)+1+\Ecc_C(b), \label{eq:predicted}
\end{equation}
where $d_{T^{\mathrm{cur}}}(s,a)$ is the depth of $a$ in the current partial repaired tree and $\Ecc_C(b)$ is the maximum distance from $b$ to any vertex of $C$ using only the original fault-pruned tree edges inside $C$.

\begin{center}
\begin{tabular}{p{0.92\linewidth}}
\toprule
\textbf{Algorithm 1: Certified edge-minimum repair for one EJ orientation}\\
\midrule
Input: dense EJ network $H_t$, source $s$, fault set $F$, orientation tree $T_\theta$.\\
1. Delete $F$ and incident tree edges from $T_\theta$.\\
2. Compute components $C_1,\ldots,C_c$ and identify source component $C_s$.\\
3. Build the component graph $\mathcal{C}_\theta$ and record all healthy EJ crossing edges between components.\\
4. Enumerate the rooted component-level spanning choices of $\mathcal{C}_\theta$; for each ordered component attachment, use the stored crossing edge with minimum attachment value.\\
5. Root the resulting candidate at $s$, compute its maximum depth, and validate fault exclusion, source reachability, acyclicity, and parent uniqueness.\\
6. Return the valid candidate of minimum depth; ties prefer fewer repair edges, then lower maximum out-degree, then deterministic edge order.\\
\bottomrule
\end{tabular}
\end{center}

\subsection{Multi-Orientation Selection}
The full algorithm evaluates every orientation in $\Theta$. The best valid candidate is selected lexicographically by
\begin{multline}
(\text{success},\text{depth},\text{repair edges},\\
\text{maximum out-degree},\text{orientation rank}).
\end{multline}

\begin{center}
\begin{tabular}{p{0.92\linewidth}}
\toprule
\textbf{Algorithm 2: EJ-MOEM broadcast repair}\\
\midrule
Input: dense EJ network $H_t$, source $s$, fault set $F$, orientation family $\Theta$.\\
1. Initialize candidate set $\mathcal{B}=\emptyset$.\\
2. For every $\theta\in\Theta$, construct $T_\theta$.\\
3. Apply Algorithm 1 to $T_\theta-F$.\\
4. Insert every validated candidate into $\mathcal{B}$.\\
5. Return the best valid candidate under the lexicographic score.\\
\bottomrule
\end{tabular}
\end{center}

\section{Correctness and Edge-Minimum Repair}

\begin{lemma}[Component lower bound]
Let $T$ be a spanning tree of a graph $G$, and let $F$ be a fault set not containing the source. If $T-F$ has $c$ healthy components, then any repaired broadcast tree that preserves the healthy vertices and reconnects these components must use at least $c-1$ component-crossing edges.
\end{lemma}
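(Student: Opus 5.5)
The plan is a standard edge-counting argument on the healthy vertex set, followed by a contraction argument that identifies which edges are "new". Write $V_h=V(G)\setminus F$ for the healthy vertices and $n_h=|V_h|$. A repaired broadcast tree $T'$ in the sense of the statement is a rooted spanning tree of $G[V_h]$ that preserves the healthy tree components. We read "preserves" as: every edge of $T-F$ is an edge of $T'$, which is the model used by Algorithm~1. Since $T'$ is a spanning tree on $n_h$ vertices, it has exactly $n_h-1$ edges.

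The first step is to count the edges of $T-F$. This graph is a forest on $V_h$ with $c$ components $C_1,\ldots,C_c$. Each component $C_j$ is a tree, so it has $|C_j|-1$ edges. Summing over components gives
\[
|E(T-F)|=\sum_{j=1}^{c}\bigl(|C_j|-1\bigr)=n_h-c .
\]

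The second step is to classify the edges of $T'$. Every edge of $T'$ is either an edge of $T-F$ or a new edge. A new edge joining two vertices of the same component $C_j$ would close a cycle with the tree path inside $C_j$, which is impossible in the tree $T'$. Hence every new edge is component-crossing, and the number of component-crossing edges in $T'$ is $(n_h-1)-(n_h-c)=c-1$. This already gives the bound, with equality, in the preserving model.

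If one wants to avoid assuming that all forest edges are kept, and only require that $T'$ reconnect the $c$ components, I would argue via contraction instead. Contract each $C_j$ to a single vertex. Because $T'$ is connected and spans $V_h$, the images of its component-crossing edges form a connected multigraph on the $c$ contracted vertices. A connected multigraph on $c$ vertices has at least $c-1$ edges, so $T'$ has at least $c-1$ component-crossing edges.

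The main obstacle is not technical. It is pinning down the informal phrases "preserves the healthy vertices" and "reconnects these components" so that the count is meaningful. I would state the model explicitly, namely that $T'$ is a spanning tree of $G[V_h]$, and note that the lower bound holds under either reading. The contraction argument needs only connectivity, and it is the version I would present as the proof.
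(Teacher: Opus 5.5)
Your proposal is correct. The contraction argument you choose to present is the paper's proof: contract each component of $T-F$ to a supernode, observe that the repaired tree's component-crossing edges must connect the $c$ supernodes, and use that a connected (multi)graph on $c$ vertices has at least $c-1$ edges. Your preliminary edge count, $(n_h-1)-(n_h-c)=c-1$, is also correct. It adds that when every edge of $T-F$ is kept, the number of crossing edges is exactly $c-1$, which is the equality the paper later asserts in the external repair-edge optimality theorem.
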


\begin{proof}
Contract each connected component of $T-F$ into a supernode. A broadcast tree spanning all healthy vertices must connect these $c$ supernodes. Any connected graph on $c$ vertices has at least $c-1$ edges. Each such edge corresponds to a healthy graph edge crossing between two distinct components.
\end{proof}

\begin{theorem}[External repair-edge optimality]
Suppose the component graph $\mathcal{C}_\theta$ of $T_\theta-F$ is connected. Then Algorithm 1 returns a non-redundant broadcast tree over $H_t-F$ and uses exactly $c-1$ external component-repair edges.
\end{theorem}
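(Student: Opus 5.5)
The proof has three parts: show that Algorithm 1 produces at least one valid candidate, check that every candidate it produces is a non-redundant broadcast tree, and count its external edges. The lower bound will come from the Component lower bound lemma.

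\textbf{Existence of a candidate.} Since $\mathcal{C}_\theta$ is connected, it has a spanning tree. Root such a tree at the source component $C_s$ and order its vertices by breadth-first search. For each non-root component $C$ with component-parent $C'$ in this order, the adjacency in $\mathcal{C}_\theta$ gives a healthy EJ crossing edge $(a,b)$ with $a\in C'$ and $b\in C$. Step 4 enumerates this rooted spanning choice and selects, among the stored edges, the one minimizing $D(a,b,C)$. The precise choice does not matter for correctness. The candidate is therefore the union of the fault-pruned tree edges inside the components and one crossing edge per non-root component.

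\textbf{Validity of the candidate.} The candidate contains no faulty vertex: every vertex of every component lies in $H_t-F$, and every crossing edge is a healthy graph edge. I re-root the candidate at $s$ as follows.
\begin{itemize}
\item Inside $C_s$ the original parent pointers of $T_\theta$ are kept.
\item Inside each other component $C$, edges are oriented away from the entry vertex $b$, so that $b$ receives its parent $a$.
\item This step needs care, because the original root of $C$ need not be $b$. Re-orienting the edges of a tree away from any chosen vertex again gives a rooted tree, so the step is harmless.
\end{itemize}
Every healthy vertex then has exactly one parent, except $s$. Following parent pointers from any vertex, one moves within its component to the entry vertex, then across the crossing edge to the parent component, and so on down the component spanning tree until reaching $C_s$ and then $s$. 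Hence there are no cycles, and every healthy vertex is reachable from $s$. The candidate is thus a spanning tree of $H_t-F$, and each healthy vertex receives the message exactly once, which is non-redundancy. These are exactly the checks performed in Step 5, so the candidate passes validation, and Step 6 returns some valid candidate.

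\textbf{Edge count.} Every candidate Algorithm 1 produces comes from a rooted spanning tree of $\mathcal{C}_\theta$, which has $c$ vertices and hence $c-1$ edges. Each of these edges contributes exactly one external repair edge, so every candidate, and in particular the one returned, uses exactly $c-1$ external edges. The tie-breaking on repair count in Step 6 is then vacuous. By the Component lower bound lemma, no repair that preserves the healthy components can use fewer than $c-1$ crossing edges. The returned tree therefore attains the minimum, which gives exactness and optimality together.

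The main obstacle is the validity check, specifically the re-rooting of non-source components at their entry vertices while preserving all of their internal tree edges. The rest is bookkeeping.
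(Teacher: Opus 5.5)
Your proposal is correct and follows the paper's route. Like the paper, you take a spanning tree of the connected component graph rooted at $C_s$, add one crossing edge per non-source component, note that the union with the fault-pruned in-component trees is a spanning tree of $H_t-F$ with exactly $c-1$ external edges, and cite the Component lower bound lemma for minimality. Your additional details (re-rooting each non-source component at its entry vertex $b$, and using an attachment order in which $a$ is already in the partial tree when $D(a,b,C)$ is evaluated) are points the paper leaves implicit, and they make the argument more careful without changing it.
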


\begin{proof}
Since $\mathcal{C}_\theta$ is connected, a component-level spanning tree rooted at $C_s$ exists. Algorithm 1 adds one crossing edge whenever it attaches one new component, so it adds exactly $c-1$ external edges. Inside each component, the fault-pruned structure is a tree. The resulting graph is connected, acyclic, includes every healthy vertex, and excludes every faulty vertex. The lower bound proves that $c-1$ is minimum.
\end{proof}

\begin{corollary}
EJ-MOEM returns a correct non-redundant repaired broadcast tree whenever at least one tested orientation has a connected healthy component graph.
\end{corollary}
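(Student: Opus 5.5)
The plan is to derive the corollary directly from the External repair-edge optimality theorem together with the selection rule of Algorithm 2. Fix a fault set $F$ with $s\notin F$, and suppose some orientation $\theta^\ast\in\Theta$ has a connected component graph $\mathcal{C}_{\theta^\ast}$.

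\textbf{Existence of a valid candidate.} First, I will apply the theorem to $T_{\theta^\ast}$. Algorithm 1 then produces a rooted spanning tree of $H_t-F$ that uses exactly $c-1$ external edges. Such a tree passes every check in step 5 of Algorithm 1: it excludes faults, reaches every healthy vertex from $s$, is acyclic, and gives each non-source vertex a unique parent. So this tree is inserted into $\mathcal{B}$ in step 4 of Algorithm 2, and hence $\mathcal{B}\neq\emptyset$.

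\textbf{Correctness of whatever is returned.} Algorithm 2 returns the lexicographically best element of $\mathcal{B}$. That element may come from a different orientation than $\theta^\ast$, but every member of $\mathcal{B}$ was admitted only after passing the same validation in step 5 of Algorithm 1. Validation means the returned structure is a tree rooted at $s$ that spans exactly the healthy vertices, with parent uniqueness. Parent uniqueness, together with rootedness and acyclicity, gives non-redundancy: each healthy node receives the message exactly once, namely along its unique parent edge. The success flag heads the lexicographic key, so the returned candidate is a successful one.

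\textbf{Main obstacle.} The only delicate point is justifying that validation in step 5 really certifies the properties of a non-redundant broadcast tree. This does not rely on the particular orientation or on a connected component graph. I would handle it by stating explicitly that a directed parent structure on $H_t-F$, in which every non-source vertex has exactly one parent, with no cycles and with all vertices reaching $s$, is exactly a rooted spanning tree. Applying this to the returned candidate finishes the argument.
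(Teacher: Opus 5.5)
Your proposal is correct and follows the route the paper intends: the corollary follows from the External repair-edge optimality theorem applied to the orientation $\theta^\ast$ whose component graph is connected, together with the fact that Algorithm 2 inserts only validated candidates into $\mathcal{B}$ and returns one of them. Your remark that the returned candidate may come from a different orientation, yet is still correct because it passed the same validation, spells out a step the paper leaves implicit (the paper gives no separate proof).
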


\begin{remark}[Orientation dominance]
Since the repaired candidate for any fixed orientation $\theta\in\Theta$ is one of the candidates considered by EJ-MOEM, the selected candidate cannot have a worse score than any single fixed-orientation candidate under the declared lexicographic score.
\end{remark}

\begin{example}[Component contraction and repair]
Suppose a two-fault deletion splits an orientation tree into four healthy components: the source component and three detached suffix components. The component lower bound says that at least three external repair edges are necessary. If the component graph is connected, Algorithm 1 attaches the three detached components one at a time and uses exactly three repair edges.
\end{example}

\section{Depth Certificates in EJ Geometry}

\begin{definition}[EJ repair certificate]
Fix an orientation $\theta$ and let $T_\theta-F$ have components $C_1,\ldots,C_c$, with source component $C_s$. A $K$-depth repair certificate consists of an ordering of the non-source components and crossing edges
\begin{equation}
(a_j,b_j),\quad j=1,\ldots,c-1,
\end{equation}
where $b_j\in C_j$ and $a_j$ lies in the partial repaired tree $T^{(j-1)}$. The certificate condition is
\begin{equation}
d_{T^{(j-1)}}(s,a_j)+1+\Ecc_{C_j}(b_j)\le K. \label{eq:cert}
\end{equation}
\end{definition}

\begin{lemma}[Certificate implies bounded depth]
If an orientation $\theta$ admits a $K$-depth repair certificate, then there exists a non-redundant repaired broadcast tree of depth at most $K$.
\end{lemma}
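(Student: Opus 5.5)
The plan is to build the repaired tree incrementally along the certificate's ordering and prove by induction on $j$ that each partial tree $T^{(j)}$ has two properties. First, it is a tree rooted at $s$ that uses only healthy vertices and contains every vertex of $C_s\cup C_1\cup\cdots\cup C_j$ exactly once. Second, every vertex has depth at most $K$. The base case is $T^{(0)}=C_s$, the source component with its original fault-pruned tree edges. $C_s$ is a subtree of $T_\theta$ containing $s$, and by the Fault-free EJ orientation tree lemma its depth is at most $t$. I would require $K\ge t$, or equivalently treat the certificate condition as including the source component. Every certificate the paper later uses has $K\in\{t+1,t+2\}$, so this causes no difficulty.

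For the inductive step, $T^{(j)}$ is obtained from $T^{(j-1)}$ by adding the crossing edge $(a_j,b_j)$ and re-rooting the component $C_j$ at $b_j$. Concretely, $C_j$ is an undirected tree, so I orient each of its edges away from $b_j$, and then make $a_j$ the parent of $b_j$. Since $C_j$ is vertex-disjoint from $T^{(j-1)}$ and exactly one edge joins them, the result is again a tree. It is acyclic with unique parents, and it contains no faulty vertex because $a_j$, $b_j$ and every vertex of $C_j$ are healthy. The key depth computation is as follows. For $v\in C_j$ we have $d_{T^{(j)}}(s,v)=d_{T^{(j-1)}}(s,a_j)+1+d_{C_j}(b_j,v)$. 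This is at most $d_{T^{(j-1)}}(s,a_j)+1+\Ecc_{C_j}(b_j)$, which is at most $K$ by \eqref{eq:cert}. Depths of vertices already in $T^{(j-1)}$ do not change.

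After $c-1$ steps, $T^{(c-1)}$ spans all components. Every healthy vertex of $H_t$ lies in some component of $T_\theta-F$, so $T^{(c-1)}$ is a spanning tree of $H_t-F$ rooted at $s$. It is therefore non-redundant, since each healthy node has exactly one parent and so receives the message once, and it has depth at most $K$.

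The main subtlety I expect is the re-rooting step. Depths inside $C_j$ must be measured from $b_j$ along the undirected tree, and not with the original orientation of $T_\theta$. This is exactly why the certificate uses $\Ecc_{C_j}(b_j)$, defined via undirected distances inside $C_j$. The base-case depth bound for $C_s$ also needs the remark about $K\ge t$ given above.
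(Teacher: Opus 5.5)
Your proposal is correct and follows the same route as the paper. The paper also attaches the components in certificate order through $(a_j,b_j)$, bounds the depth of each $v\in C_j$ by $d_{T^{(j-1)}}(s,a_j)+1+\operatorname{dist}_{C_j}(b_j,v)\le K$, and concludes that the result is a spanning tree of $H_t-F$. Your base-case remark is a genuine improvement, because the paper's proof never bounds the depth of $C_s$: when $c=1$ every $K$ is vacuously certified, so the lemma needs $K$ to be at least the depth of $C_s$, which $K\ge t$ guarantees.
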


\begin{proof}
Attach components in the certificate order. When $C_j$ is attached through $(a_j,b_j)$, every vertex $v\in C_j$ obtains a path from $s$ through $a_j$, then across $(a_j,b_j)$, and then inside $C_j$ from $b_j$ to $v$. Thus
\begin{equation}
\operatorname{dist}(s,v)\le d_{T^{(j-1)}}(s,a_j)+1+\operatorname{dist}_{C_j}(b_j,v)\le K.
\end{equation}
The final structure is connected, acyclic, includes every healthy vertex, and has maximum depth at most $K$.
\end{proof}

\begin{lemma}[Certified repair selection]
If some component-level repair of $T_\theta-F$ admits a $K$-depth repair certificate, then Algorithm 1 returns, for the same orientation, a valid candidate of depth at most $K$.
\end{lemma}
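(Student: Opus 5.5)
The argument will rest on the exhaustiveness of Step~4 of Algorithm~1 together with the preceding lemma (Certificate implies bounded depth). Suppose some component-level repair of $T_\theta-F$ admits a $K$-depth certificate. It consists of an ordering $C_1,\ldots,C_{c-1}$ of the non-source components and crossing edges $(a_j,b_j)$ satisfying \eqref{eq:cert}. Since $a_j$ lies in $T^{(j-1)}$, it belongs to $C_s$ or to some earlier $C_i$. The map sending each $C_j$ to the component containing $a_j$ is therefore a rooted component-level spanning tree of $\mathcal{C}_\theta$ (rooted at $C_s$), together with an attachment order compatible with it. In particular $\mathcal{C}_\theta$ is connected, so the External repair-edge optimality theorem applies and every candidate built in Step~4 is a non-redundant tree over $H_t-F$.

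The first key step is to note that Step~4 enumerates every rooted component-level spanning choice with its attachment order. This includes the ordered choice $\sigma$ underlying the certificate. For that choice, when $C_j$ is attached, the algorithm does not necessarily use $(a_j,b_j)$. Instead it uses the stored crossing edge $(a'_j,b'_j)$ from the same parent component to $C_j$ that minimizes $D(a,b,C_j)$ in \eqref{eq:predicted}, evaluated on its own current partial tree $T'^{(j-1)}$.

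The second step is an induction on $j$ with the hypothesis that every vertex of $T'^{(j-1)}$ has depth at most $K$, and that $d_{T'^{(j-1)}}(s,a)\le d_{T^{(j-1)}}(s,a)$ for all $a$ in the components attached so far. At step $j$, the certificate edge $(a_j,b_j)$ is among the candidates the algorithm compares. Minimality then gives
\begin{equation*}
D(a'_j,b'_j,C_j)\le d_{T'^{(j-1)}}(s,a_j)+1+\Ecc_{C_j}(b_j)\le d_{T^{(j-1)}}(s,a_j)+1+\Ecc_{C_j}(b_j)\le K .
\end{equation*}
By the argument of the Certificate-implies-bounded-depth lemma, every vertex of $C_j$ then receives depth at most $K$ in $T'^{(j)}$.

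The main obstacle is the comparison of depths in the second step. Greedy minimization of $D$ yields a bound of at most $K$ on each new component, but it does not automatically dominate the certificate tree's depths vertex by vertex. So the inequality $d_{T'}(s,a_j)\le d_T(s,a_j)$ may fail for individual vertices. I would first try to avoid needing it. The idea is to interpret ``for each ordered component attachment'' as including the choice of crossing edge, so that the certificate's own edge sequence is itself one enumerated candidate; the preceding lemma then gives depth at most $K$ directly. If the enumeration really commits to the minimum-$D$ edge, I would strengthen the induction to compare full depth profiles within each component. Failing that, I would argue that the certificate can be replaced by a greedy certificate of the same bound $K$, since each greedy step keeps every attached depth at most $K$, and that is all \eqref{eq:cert} requires at later steps. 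Finally, Step~6 returns a minimum-depth valid candidate, so the returned depth is at most that of the candidate just constructed, which is at most $K$.
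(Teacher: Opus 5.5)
You correctly spotted the real issue, but the proposal never resolves it, and two of your three fallbacks cannot work. Under the literal reading of Step~4 (each attachment uses the minimum-$D$ crossing edge), neither the vertex-wise domination $d_{T'^{(j-1)}}(s,a)\le d_{T^{(j-1)}}(s,a)$ nor your ``greedy certificate'' holds. Here is a concrete configuration:
\begin{itemize}
\item $C_1=\{u,v,x\}$ with tree edges $uv$ and $vx$;
\item $C_2=\{w,w'\}$ with tree edge $ww'$;
\item the only crossing edges are $(p,u)$ and $(q,v)$, with $p,q\in C_s$ at depth $5$, and $(u,w)$.
\end{itemize}
The certificate $(p,u),(u,w)$ has values $5+1+2=8$ and $6+1+1=8$, so $K=8$. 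The component graph is a path, so the spanning choice and the attachment order are forced. The greedy rule prefers $(q,v)$, whose value is $7$. That places $u$ at depth $7$, and the only edge into $C_2$ then has value $7+1+1=9>K$.

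This breaks both fallbacks. A strengthened profile-domination induction is false, since $u$ gets deeper under the greedy choice. Your third fallback misreads \eqref{eq:cert}: a later attachment needs the vertex-specific bound $d(s,a_j)\le K-1-\Ecc_{C_j}(b_j)$, not merely that every attached depth is at most $K$. Your greedy argument uses no EJ geometry, so this abstract configuration shows that route cannot be completed as written.

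The route that works is your first one, and it is exactly the paper's proof. Read the rooted component-level spanning choices of Step~4 as including the crossing edge used for each attachment. Then the certificate's own sequence $(a_j,b_j)$ is one of the enumerated candidates; the paper says simply ``the certified repair is one of these choices.'' The preceding lemma bounds that candidate's depth by $K$, and Step~6 returns a minimum-depth valid candidate, so the returned depth is at most $K$.

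You should commit to this reading, state it explicitly as the hypothesis on Algorithm~1, and drop the greedy induction. The paper's one-line proof silently relies on the same reading. With a strict minimum-$D$ edge rule, the claimed inclusion of the certified repair among the candidates is not automatic.
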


\begin{proof}
Algorithm 1 enumerates the rooted component-level spanning choices of the connected component graph. The certified repair is one of these choices. Therefore the minimum-depth candidate returned by Algorithm 1 has depth no larger than $K$.
\end{proof}

\subsection{Hexagonal Suffixes and Sector Entries}
For a vertex $v=(x,y)$, write the three strip coordinates as
\begin{equation}
X(v)=x,\quad Y(v)=y,\quad Z(v)=x+y.
\end{equation}
The layer is $\rho(v)=\max\{|X(v)|,|Y(v)|,|Z(v)|\}$.

\begin{definition}[Sector suffix]
Let $f$ be a fault of layer $r=\rho(f)$. A component $C$ of $T_\theta-\{f\}$ is an $r$-sector suffix if every vertex of $C$ has layer at least $r+1$, the parent path from every vertex in $C$ decreases the layer by one until it reaches a child of $f$, and there is an entry vertex $b\in C$ of layer $r+1$ with
\begin{equation}
\Ecc_C(b)\le t-r-1.
\end{equation}
\end{definition}

\begin{lemma}[Sector side-entry lemma]
\label{lem:sector-side-entry}
Let $C$ be an $r$-sector suffix with entry $b$. Suppose $b$ lies in the sector bounded by two adjacent rays $\delta_i$ and $\delta_{i+1}$. If the chosen priority is either $C_i$ or $R_{i+1}$, then $C$ has a side-entry crossing from outside the descendant interval. More precisely, there is a healthy neighbor $a$ of $b$ such that $a\notin C$ and either $\rho(a)=r+1$ or $\rho(a)=r$. If $a$ is not the second fault, then $a$ lies in the source component of the fault-pruned tree.
\end{lemma}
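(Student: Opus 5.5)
We work in the three-strip coordinates $X,Y,Z$ and use the sector assumption to pin down which coordinate is tight at $b$. Then we use the priority order of $C_i$ or $R_{i+1}$ to show that one inward or lateral neighbor of $b$ is not a descendant of $f$.

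\textbf{Step 1 (normalization).} The six sectors are permuted cyclically by the rotation $(x,y)\mapsto(-y,x+y)$. Under this rotation the families $C_i$ and $R_{i+1}$ are carried to $C_{i+1}$ and $R_{i+2}$. So it suffices to treat one sector, say the one between $\delta_0=E$ and $\delta_1=SE$. There $x\ge 0$, $y\le 0$ and $\rho(v)=x=X(v)$. Thus $b=(r+1,y_b)$ with $-(r+1)\le y_b\le 0$, and the tight strip coordinate is $X$.

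\textbf{Step 2 (neighbors of $b$ by layer).} We list the neighbors of $b$ by layer:
\begin{itemize}
\item layer $r$: $W$ always; $NW$ when $y_b<0$;
\item layer $r+1$: $N$ when $y_b<0$; $S$ when $y_b>-(r+1)$;
\item layer $r+2$: $E$ and $SE$.
\end{itemize}
In the interior of the sector, $b$ therefore has two inward neighbors, $W$ and $NW$. Under $C_0$ (or $R_1$) the priority list ranks $W$ and $NW$ in a definite order, so the parent $p(b)$ is one of them. The other one, or the lateral neighbor $N$ or $S$, is the candidate $a$. Since $a$ lies outside $C$ whenever its layer is at most $r+1$ and it is not a descendant of $f$, it remains to show it is not a descendant of $f$.

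\textbf{Step 3 (the candidate is not a descendant of $f$).} The suffix hypothesis says the parent path of every vertex of $C$ descends through the layers to a child of $f$. The descendants of $f$ at layer $r+1$ form a contiguous interval along the boundary of $H_{r+1}$, namely the children of $f$ determined by the fixed priority. Under $C_0$ or $R_1$ this interval is short: it contains $f+E$ and possibly $f+SE$.

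Take $a$ to be the inward neighbor of $b$ that is not $p(b)$, or the lateral neighbor lying on the side of the interval away from $f$. Its parent path moves in the direction that is first in the priority order, and that direction drifts away from $f$'s column. A straight-line parent chain then shows that the path from $a$ reaches layer $0$ without meeting $f$. Hence $a$ is in the source component unless $a$ is, or the chain hits, the second fault. The lemma explicitly allows the case where $a$ itself is the second fault. For the case where the chain hits the second fault, I would try to avoid it by choosing between the two candidates.

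\textbf{Step 4 (boundary rays).} The case $y_b=0$ or $y_b=-(r+1)$ needs separate handling, because then $b$ lies on a ray and $NW$ or $S$ disappears. On the ray $E$ the only inward neighbor is $W$, so we use the lateral $S$ neighbor at layer $r+1$. On the ray $SE$ we use $N$ or $W$ symmetrically.

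\textbf{Main obstacle.} I expect Step 3 to be hardest: the precise claim that the chain from $a$ avoids $f$, and, as the lemma's final sentence requires, that it avoids the second fault unless $a$ itself is that fault. This requires computing the parent map of $C_0$ and $R_1$ explicitly on a sector. I would first write out $p(v)$ for $C_0$ in each of the six sectors as a closed-form rule, for example ``move $W$ when $x$ is tight''. Then I would check that the parent paths are monotone in one strip coordinate. That monotonicity would give the separation from $f$.
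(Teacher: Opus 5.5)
\textbf{There is a genuine gap, and the ray case contains a wrong choice.}

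Your Step~3 is where the lemma is actually decided, and it is never carried out. The mechanism you sketch for it is also inaccurate. A parent step follows the highest-priority \emph{layer-decreasing} move, and that move changes between the interior of a sector and its rays. The chain from $b+NW$ does not drift away from the chain from $f$: under $C_0$ with $y_b\le -2$ it runs parallel to it and merges with it on the $SE$ ray.

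The missing idea is the layer argument the paper uses. Since $\rho(p(v))=\rho(v)-1$, every proper descendant of $f$ has layer $>r$. So any vertex $a\neq f$ with $\rho(a)\le r$ lies outside the subtree of $f$, and its root path never meets $f$. For $b$ strictly inside the sector this ends the proof at once: take $a$ to be the inward gate other than $p(b)=f$. No explicit parent map is needed.

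The priority hypothesis only matters on the rays. There $f$ is the unique inward neighbor of $b$, so the witness must be a lateral vertex of layer $r+1$, and that vertex may itself be a child of $f$. One must check that its parent is not $f$, then apply the layer argument to that parent. Your Step~4 gets this wrong for $R_1$:
\begin{itemize}
\item Take $b=(r+1,0)$ and $f=(r,0)$. The vertex $b+S=(r+1,-1)$ has inward neighbors $(r,-1)$ and $f$. $R_1$ ranks $NW$ before $W$ (equivalently $SE$ before $E$ if the list is read as parent-to-child directions). So $b+S$ is a child of $f$ and roots a different detached component.
\item The right witness is $b+NW=(r,1)$, whose $R_1$-parent is $(r-1,1)\neq f$. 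Your Step~2 list misses it because $b+NW$ has layer $r+1$ when $y_b=0$. That list also wrongly says $b+W$ always has layer $r$; this fails at $y_b=-(r+1)$.
\item Likewise, on the $SE$ ray under $C_0$ the vertex $b+N=(r+1,-r)$ is a child of $f=(r,-r)$, so $b+W$ must be used there.
\end{itemize}
A minor point: $(x,y)\mapsto(-y,x+y)$ sends $\delta_i$ to $\delta_{i-1}$, so Step~1 should use its inverse.

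\textbf{The second-fault clause.} Your plan to avoid the second fault ``by choosing between the two candidates'' cannot succeed, because in the two-fault reading the last sentence of the lemma is false. Under $C_0$, every vertex in the interior of the $E$--$SE$ sector steps $W$ until it reaches the $SE$ ray and then follows that ray, so all of them route through $(1,-1)$.

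Take $t\ge 5$, $f=(3,-2)$ and $g=(1,-1)$. Then $b=(4,-2)$. The neighbors of $b$ of layer $3$ or $4$ other than $f$ are $(3,-1)$, $(4,-1)$ and $(4,-3)$. All three are proper descendants of $g$, so none lies in the source component, although none equals $g$. With $f=(3,-1)$ and $g=(1,0)$ the same happens under $R_1$.

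What can be proved is the one-fault statement: $a$ is neither $f$ nor a descendant of $f$, i.e.\ $a$ lies in the source component of $T_\theta-\{f\}$. That is also all the paper's layer argument gives. The paper handles obstruction by a second fault later, in the non-adjacent separation and paired-cut lemmas, by attaching the other cut's component first. Your proof should stop at the one-fault claim.
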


\begin{proof}
Inside the sector bounded by $\delta_i$ and $\delta_{i+1}$, write every point in sector coordinates as $u\delta_i+v\delta_{i+1}$ with $u,v\ge0$. A cyclic priority $C_i$ or the reversed priority $R_{i+1}$ chooses one of the two inward gates before the other whenever both lower the layer. The descendants of a fixed child of $f$ form a monotone interval in the first layer beyond the cut, and the two endpoints cannot both be internal descendants of the same child of $f$. Hence at least one endpoint neighbor across a sector boundary is outside $C$.

At strip-intersection vertices where two coordinates among $|X|,|Y|,|Z|$ are tight, there are two inward directions. The priority selects one as the parent direction; the alternative gate has layer $r$ and is not a descendant of $f$. The six signed strip-side cases are:
\begin{center}
\footnotesize
\begin{tabular}{@{}p{0.27\linewidth}p{0.25\linewidth}p{0.31\linewidth}@{}}
\toprule
Tight strip of $b$ & Interior range & Two layer-$r$ gates \\
\midrule
$X=r+1$ & $y<0$ & $W$, $NW$ \\
$Y=r+1$ & $x<0$ & $S$, $SE$ \\
$Z=r+1$ & $x,y>0$ & $W$, $S$ \\
$X=-(r+1)$ & $y>0$ & $E$, $SE$ \\
$Y=-(r+1)$ & $x>0$ & $N$, $NW$ \\
$Z=-(r+1)$ & $x,y<0$ & $E$, $N$ \\
\bottomrule
\end{tabular}
\end{center}
In each case the alternative gate is not a descendant of $f$, since every descendant of $f$ has layer strictly greater than $r$. The quotient identification does not change this conclusion: parent paths keep strictly decreasing canonical layer sequences and cannot pass through $f$.
\end{proof}

\begin{lemma}[Sector suffix attachment]
Let $C$ be an $r$-sector suffix with entry $b$. If there is a healthy crossing edge $(a,b)$ with $a$ in the current partial repaired tree $T^{(j-1)}$ and $d_{T^{(j-1)}}(s,a)\le r+1$, then attaching $C$ through $(a,b)$ has certificate value at most $t+1$. If $d_{T^{(j-1)}}(s,a)\le r+2$, the value is at most $t+2$.
\end{lemma}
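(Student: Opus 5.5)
This lemma is a direct substitution into the certificate condition \eqref{eq:cert}, combined with the eccentricity bound built into the definition of an $r$-sector suffix. I would first recall that, by definition, $C$ has an entry vertex $b$ of layer $r+1$ with $\Ecc_C(b)\le t-r-1$. Next I would write the predicted attachment value from \eqref{eq:predicted} for the crossing edge $(a,b)$ into $C$:
\begin{equation*}
D(a,b,C)=d_{T^{(j-1)}}(s,a)+1+\Ecc_C(b).
\end{equation*}

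Under the first hypothesis, $d_{T^{(j-1)}}(s,a)\le r+1$. This gives $D(a,b,C)\le (r+1)+1+(t-r-1)=t+1$, because the $r$ terms cancel. Under the second hypothesis, $d_{T^{(j-1)}}(s,a)\le r+2$, and the same computation gives $D(a,b,C)\le t+2$. In both cases the crossing edge $(a,b)$ with $a\in T^{(j-1)}$ and $b\in C$ satisfies the certificate condition \eqref{eq:cert} for the corresponding $K$. This is exactly the claim that attaching $C$ through $(a,b)$ has certificate value at most $t+1$, respectively $t+2$.

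The only point needing care is that $\Ecc_C(b)$ must be measured inside the fault-pruned tree component $C$, not in $H_t$. That is precisely how both the suffix definition and \eqref{eq:predicted} define it, so no additional geometric argument is needed here. The arithmetic is routine, so I expect no real obstacle in this lemma.

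The substantive difficulty lies elsewhere: showing that a neighbor $a$ of $b$ in the partial repaired tree actually has depth at most $r+1$ or $r+2$. That is supplied separately by Lemma~\ref{lem:sector-side-entry}, which produces $a$ with $\rho(a)\in\{r,r+1\}$ in the source component, and by the fault-free depth bound, which gives $d(s,a)=\rho(a)$ there. The present lemma does not need those facts.
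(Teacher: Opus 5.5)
Your proposal is correct and is essentially the paper's proof: both substitute $\Ecc_C(b)\le t-r-1$ from the definition of an $r$-sector suffix and $d_{T^{(j-1)}}(s,a)\le r+1$ (resp.\ $r+2$) into the certificate value to obtain $t+1$ (resp.\ $t+2$). Your remark that the real work of producing such a neighbor $a$ belongs to the side-entry lemma rather than this lemma matches how the paper structures the argument.
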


\begin{proof}
By the suffix definition, $\Ecc_C(b)\le t-r-1$. Therefore
\begin{equation}
d_{T^{(j-1)}}(s,a)+1+\Ecc_C(b)\le (r+1)+1+(t-r-1)=t+1.
\end{equation}
The same substitution with $d_{T^{(j-1)}}(s,a)\le r+2$ gives $t+2$.
\end{proof}

\begin{theorem}[One-fault EJ depth theorem]
For every $t\ge1$ and every one-fault placement $F=\{f\}\subseteq H_t\setminus\{s\}$, the EJ-MOEM orientation family contains a repaired tree of depth at most $t+1$.
\end{theorem}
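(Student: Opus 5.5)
The plan is to produce, for a single fault $f$ of layer $r=\rho(f)$ with $1\le r\le t$, one orientation $\theta\in\Theta$ that admits a $(t+1)$-depth repair certificate. The lemmas \emph{Certificate implies bounded depth} and \emph{Certified repair selection} then give a repaired tree of depth at most $t+1$ for that $\theta$. By the Orientation dominance remark, EJ-MOEM selects a candidate at least as good.

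If $r=t$, then $f$ is a leaf of every $T_\theta$. Deleting it leaves one component, of depth at most $t$, and nothing remains to prove. So assume $r<t$. First I describe the components of $T_\theta-\{f\}$. Parent edges decrease $\rho$ by exactly one, so the vertices separated from $s$ are exactly the descendants of $f$. These split into one component per child $g$ of $f$: the subtree $C_g$ rooted at $g$, with $\rho(g)=r+1$. Every vertex of $C_g$ has layer at least $r+1$, and its parent path descends one layer at a time to $g$. Every vertex of $C_g$ has layer at most $t$, so its depth below $g$ is at most $t-r-1$. Hence $\Ecc_{C_g}(g)\le t-r-1$, and each $C_g$ is an $r$-sector suffix with entry $b=g$. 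The source component $C_s$ consists of all non-descendants of $f$ other than $f$ itself. Their tree paths avoid $f$, so their depths equal their layers.

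Next, choose $\theta$ from the location of $f$. Take the sector bounded by adjacent rays $\delta_i,\delta_{i+1}$ that contains $f$; if $f$ lies on a ray, either neighbouring sector will do. Set $\theta=C_i$ (or $R_{i+1}$). For each child $g$, apply the Sector side-entry lemma. It gives a healthy neighbour $a$ of $g$ with $a\notin C_g$ and $\rho(a)\in\{r,r+1\}$. Because there is only one fault, the lemma places $a$ in $C_s$, and so $d(s,a)=\rho(a)\le r+1$. All detached components are therefore attached directly to $C_s$; no component needs to hang from another detached component. By the Sector suffix attachment lemma, each attachment has certificate value at most $(r+1)+1+(t-r-1)=t+1$. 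Vertices of $C_s$ keep depth at most $t$. Attaching the components in any order yields a $(t+1)$-depth certificate.

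The main obstacle is the side-entry step, specifically checking that $a$ is really outside \emph{every} detached subtree and not merely outside $C_g$. If $\rho(a)=r$, this is immediate, since all descendants of $f$ have layer greater than $r$. If $\rho(a)=r+1$, I must rule out that $a$ is another child of $f$. Here I will use the priority structure. Under $C_i$, the children of $f$ at layer $r+1$ are the neighbours whose first admissible inward direction points to $f$. In the sector coordinates $u\delta_i+v\delta_{i+1}$ these form a contiguous run of at most two vertices on the layer-$(r+1)$ side. An endpoint of this run has a same-layer neighbour, or a layer-$r$ gate from the strip table, whose parent lies outside $f$. I would first check this by explicit casework on the six signed strip sides in that table, with separate handling of corner (strip-intersection) vertices. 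After that I would confirm that the quotient identification does not create extra adjacencies that change the depths.
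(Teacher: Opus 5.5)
Your proposal is correct and follows the paper's proof. You choose $C_i$ or $R_{i+1}$ for the sector containing $f$ (or its children), note that each detached child subtree is an $r$-sector suffix, take a side entry in $C_s$ of depth at most $r+1$ from the sector side-entry lemma, and finish with the suffix attachment lemma. Your additional checks fill in a point the paper's side-entry lemma asserts without real justification: the trivial case $r=t$, and the requirement that the side-entry neighbour is not a sibling child of $f$. That check does go through, because the chosen orientation gives $f$ at most two children. Each child therefore keeps either a layer-$r$ gate or a same-layer neighbour outside the subtree of $f$, and that neighbour lies in $C_s$.
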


\begin{proof}
Let $r=\rho(f)$. Choose a cyclic or reversed cyclic orientation whose first two priorities match the sector containing the outward children of $f$. Each detached component is an $r$-sector suffix. By the sector side-entry lemma, each suffix has a healthy lateral entry from the source component. The repaired-side endpoint has depth at most $r+1$ in the original tree. The sector suffix attachment lemma gives value at most $t+1$ for every detached component.
\end{proof}

\begin{remark}[Explicit one-fault witness]
If the failed vertex $f$ lies at layer $r$ in sector $S_i$, take the cyclic or reverse orientation whose first priorities expose the two boundary directions of $S_i$. Let $b$ be the first layer-$(r+1)$ vertex of the descendant suffix cut off by $f$. The repair edge is the lateral edge $(a,b)$ where $a$ is the side-entry neighbor supplied by the sector side-entry lemma. Figure~\ref{fig:one-fault-witness} shows the concrete witness used in Example~4.
\end{remark}

\begin{figure}[H]
\centering
\includegraphics[width=0.65\linewidth]{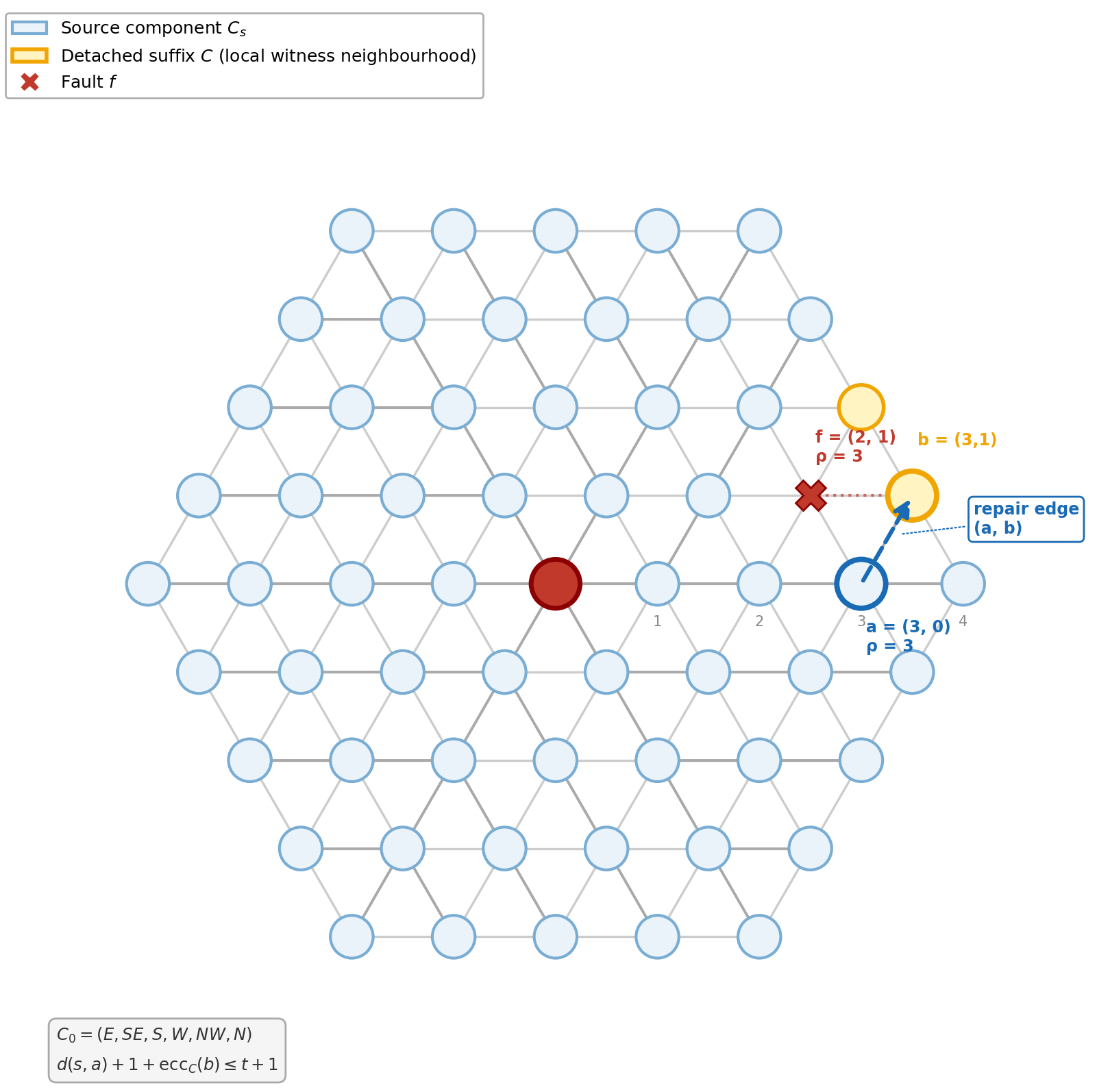}
\caption{One-fault side-entry witness for $t=4$. The fault $f=(2,1)$ cuts off a local sector suffix; the dashed repair edge $((3,0),(3,1))$ reconnects the suffix from the source component.}
\label{fig:one-fault-witness}
\end{figure}

\subsection{Two-Fault Certificate Classification}

\begin{lemma}[Non-adjacent sector separation]
\label{lem:nonadjacent-separation}
Let $f$ and $g$ be two faults in non-adjacent sectors of $H_t$, and suppose that they are not on opposite rays. Then the side-entry endpoint $a$ supplied for a suffix detached by $f$ is neither $g$ nor a vertex blocked solely by $g$.
\end{lemma}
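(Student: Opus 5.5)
The plan is to localize both the side-entry endpoint $a$ and everything that could be blocked by $g$, and then show these two regions are disjoint using the sector geometry of $H_t$.

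\textbf{Localizing $a$.} Let $f$ lie in the closed sector bounded by rays $\delta_i,\delta_{i+1}$ and have layer $r=\rho(f)$. By the sector side-entry lemma (Lemma~\ref{lem:sector-side-entry}), the endpoint $a$ is a neighbor of an entry vertex $b$, where $\rho(b)=r+1$ and $b$ is adjacent to a child of $f$. Hence $a$ lies within EJ distance at most $3$ of $f$. More precisely, $a$ lies in the closed union of the sector of $f$ and the two unit-width boundary strips along $\delta_i$ and $\delta_{i+1}$, at layer $r$ or $r+1$. I would make this explicit with the six-row table in that lemma. In each row, the gate directions change only the non-tight strip coordinate by $\pm1$, and they keep the sign pattern of $(X,Y,Z)$ that characterizes the sector. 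It follows that $a$ stays in the closed sector $S_i\cup S_{i+1}$ boundary neighbourhood: the sector of $f$ together with its two bounding rays.

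\textbf{Showing $a\neq g$.} The fault $g$ lies in a sector non-adjacent to that of $f$, so it shares no bounding ray with it. Suppose first that $g$ is in the open part of such a sector. Its sign pattern on $(X,Y,Z)$ then differs from every point of the closed $f$-sector, and so $a\neq g$. The degenerate case is when $g$ lies on a bounding ray of its own sector that is also a bounding ray of a sector adjacent to $f$'s. I would handle this by noting that non-adjacent sectors have bounding rays differing by at least two positions cyclically, so such a ray is never $\delta_i$ or $\delta_{i+1}$. The only way a ray of $g$'s sector could coincide with or neighbour the $f$ region is the opposite-ray configuration, which the hypothesis excludes.

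\textbf{Showing $a$ is not blocked solely by $g$.} A vertex is blocked by $g$ exactly when its parent path passes through $g$. Parent paths are layer-decreasing, and each step of a $C_i$ or $R_{i+1}$ orientation moves in one of the two inward gate directions. So the descendant set of $g$ is contained in the outward cone of $g$'s closed sector. The parent path of $a$ goes from layer $\rho(a)\le r+1$ down to $s$, again using only the gates of $f$'s sector, and therefore stays in the closed $f$-sector cone. The two cones meet only at $s$ or along a shared ray. Non-adjacency together with the exclusion of opposite rays rules out a shared ray. Hence $g$ is not on the parent path of $a$, which means $a$ is in the source component unless blocked by $f$, and the lemma already guarantees $a\notin C$.

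The main obstacle is the boundary case where $f$ or $g$ sits exactly on a ray. In that case the priority orientation may route a parent path along the ray into the neighbouring sector, and the cone-containment claim has to be checked against the tie-breaking rule. I would handle this first by an explicit check of the parent direction of ray vertices under $C_i$ and $R_{i+1}$, confirming that ray vertices point inward along the ray itself. This keeps the paths in the closed sector and makes the cone argument rigorous.
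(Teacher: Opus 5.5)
\textbf{There is a genuine gap, in the ray case.} Your strategy is the one the paper sketches: localize $a$ beside $f$'s sector, keep $g$ a sector away, and conclude that the relevant cones are disjoint. But the steps you add to make it rigorous fail when $f$ or $g$ lies on a ray, and the check in your last paragraph does not repair them.

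\begin{itemize}
\item[(i)] Non-adjacent sectors need not have bounding rays two positions apart. $S_i$ and $S_{i+2}$ have the consecutive rays $\delta_{i+1},\delta_{i+2}$, which together bound $S_{i+1}$. So a ray-resident $g$ can sit right next to $f$'s region with no opposite-ray configuration involved.
\item[(ii)] If $f$ is on a bounding ray, $b$ can sit on that ray with two tight strip coordinates. Every unit step changes two of $X,Y,Z$, so a lateral move can push the third coordinate off zero into the neighbouring open sector. There $a$'s parent path uses that sector's gates, not those of $f$'s sector.
\item[(iii)] In a fixed orientation each open sector $S_j$ is a comb whose lines all hang off one bounding ray. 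A ray vertex $m\delta_j$ therefore cuts off every line hanging off $\delta_j$ at or beyond it. Its descendants are not confined to the closed sector you assign it, and for $m=1$ they can fill a whole open sector.
\end{itemize}
Your observation that ray vertices point inward along their ray is true (they have a unique inward neighbour), but it addresses none of (i)--(iii).

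\textbf{Counterexample.} Let $f=(r,-r)=r\delta_1\in\overline{S_0}$ and $g=(0,-1)=\delta_2\in\overline{S_2}$, with $t\ge r+2$. The closed sectors are non-adjacent and the rays are not opposite, which is exactly the degenerate case you claim to handle. Use $C_0$, one of your admissible $C_i,R_{i+1}$ for $i=0$; it makes interior vertices of $S_0$ step by $-\delta_0$ and those of $S_1$ step by $-\delta_1$. The entry $b=(r+1,-r-1)$ of the ray suffix has exactly two neighbours of layer at most $r+1$ besides $f$:
\begin{itemize}
\item[$\cdot$] $(r+1,-r)=\delta_0+r\delta_1$, which is another child of $f$;
\item[$\cdot$] $(r,-r-1)=r\delta_1+\delta_2$, whose parent path $(r,-r-1)\to(r-1,-r)\to\cdots\to(1,-2)\to(0,-1)$ ends at $g$.
\end{itemize}
So the side-entry endpoint lies in open $S_1$, outside the region you claim for it, and is blocked solely by $g$, yet your argument certifies it. Under $R_1$ the neighbour $(r+1,-r)$ does lie in the source component. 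Hence the orientation and the side entry must be chosen with $g$ in view, or the hypothesis must exclude this configuration.

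\textbf{What is missing.} You need a convention for ray vertices. With half-open sectors (each ray assigned to exactly one sector, so that faults on $\delta_{i+1}$ and $\delta_{i+2}$ count as adjacent), $a$ and its whole parent path lie in the union of the closed sectors containing $f$, and $g$ lies outside that union. Being blocked by $g$ only means that $g$ lies on $a$'s parent path, so this gives the lemma directly, without any analysis of $g$'s descendant cone. The paper's three-line proof (``the descendant cones are disjoint'') glosses over the same case, but your version asserts specific steps that are false there.
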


\begin{proof}
Each side-entry edge changes one of the three strip coordinates $X,Y,Z$ by $\pm1$. A fault in a non-adjacent, non-opposite sector is separated from both side-entry boundaries by at least one full sector, so it cannot be the side-entry endpoint and cannot occupy the crossing edge used for that suffix. The descendant cones are disjoint except at opposite-ray boundary cases.
\end{proof}

\begin{lemma}[Non-adjacent paired cuts]
\label{lem:nonadjacent-components}
For two non-adjacent, non-opposite-ray faults, a side-entry endpoint for one suffix that is not in the source component can only lie in the component detached by the other cut.
\end{lemma}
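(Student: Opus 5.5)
The plan is to use the component structure of $T_\theta-\{f,g\}$ and examine where a side-entry endpoint can lie. Let $a$ be the side-entry endpoint supplied by the sector side-entry lemma for a suffix $C$ detached by $f$. Every healthy vertex of $T_\theta-\{f,g\}$ lies in exactly one component. Its parent path in $T_\theta$ either reaches $s$ without meeting $\{f,g\}$, in which case it is in the source component, or it first meets $f$ or $g$. So if $a$ is not in the source component, the parent path of $a$ passes through $f$ or through $g$. The proof reduces to excluding the first alternative and then identifying which component the second alternative produces.

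\textbf{Step 1: $a$ is not a descendant of $f$.} The sector side-entry lemma gives $\rho(a)\in\{r,r+1\}$ with $a\notin C$. If $\rho(a)=r$, then $a$ cannot be a descendant of $f$, because every descendant of $f$ has layer at least $r+1$. If $\rho(a)=r+1$, the monotone-interval part of that lemma shows that $a$ lies outside the descendant interval of $f$ in layer $r+1$. This also excludes the other children-suffixes of $f$, since those belong to the same interval of descendants of $f$. Hence the parent path of $a$ avoids $f$.

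\textbf{Step 2: the blocking must come from $g$.} By Step 1, if $a$ is not in the source component, its parent path contains $g$ and no earlier fault, and $a\neq g$ since $a$ is healthy. By definition of the forest $T_\theta-\{f,g\}$, $a$ then lies in a component whose topmost vertex is a child of $g$, that is, in a component detached by the cut at $g$. The non-adjacent sector separation lemma is what makes this clean: it ensures $a$ is neither $g$ nor a vertex whose status is determined by overlapping descendant cones. Because $f$ and $g$ are non-adjacent and not on opposite rays, the descendant cones of $f$ and $g$ are disjoint, so the component containing $a$ cannot mix descendants of both faults. It is therefore exactly a component detached by the other cut.

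\textbf{Main obstacle.} I expect the hardest step to be the $\rho(a)=r+1$ case of Step 1, namely confirming rigorously that a lateral neighbor in layer $r+1$ is outside the entire descendant set of $f$, not only outside $C$. I would handle it by using the priority order of $C_i$ or $R_{i+1}$ to show that the parent of $a$ lies on the side of the sector boundary away from $f$'s cone. Concretely, I would write the sector coordinates $u\delta_i+v\delta_{i+1}$ and check that the parent step from $a$ keeps the relevant coordinate outside the interval spanned by $f$'s children. I would then invoke disjointness of cones in the non-opposite-ray case to finish.
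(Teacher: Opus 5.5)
Your proof is correct in substance, but it takes a different route from the paper. The paper argues geometrically. Because $f$ and $g$ lie in non-adjacent, non-opposite sectors, $g$'s outward cone is a full sector away from the side-entry boundaries of $f$'s suffix. So $g$ can block $a$ only if $a$ lies on the boundary of $g$'s detached cone, and a ``third component'' is excluded because the parent path would have to cross two sector boundaries without meeting a deleted vertex.

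You instead use the forest structure of $T_\theta-\{f,g\}$. Every non-source component is the pruned subtree hanging from a child of $f$ or of $g$, and which one is fixed by the first fault on a vertex's root path. So the ``third component'' case disappears automatically, and all the geometry is concentrated in the one-fault fact that $a$ is not a descendant of $f$ in $T_\theta$. This gives a shorter, checkable argument. It also shows that the sector hypothesis matters here only for getting $a\neq g$ from Lemma~\ref{lem:nonadjacent-separation}. Your disjoint-cones remark is unnecessary, since Step~2 works for any second fault. Take only $a\neq g$ from that lemma: its ``not blocked solely by $g$'' clause, combined with your Step~1, would already put $a$ in the source component and make the present statement vacuous. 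The paper's version additionally localizes $a$ on the boundary of $g$'s cone, although Lemma~\ref{lem:separated} uses only the conclusion you prove.

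One correction to Step~1. The monotone-interval argument in the side-entry lemma concerns the descendants of a single child of $f$ and concludes only $a\notin C$. For a neighbor of $b$ in layer $r$ or $r+1$ this is automatic, since $C$ meets layer $r+1$ only in $b$. It does not exclude siblings of $b$, and siblings do occur. Use the convention $p(v)=v-\delta$ of the paper's $t=4$ example. Under $C_1$, the ray fault $f=(r,0)$ has all of $(r+1,0)$, $(r+1,-1)$, $(r,1)$ as children, so both lateral neighbors of $b=(r+1,0)$ are descendants of $f$. What rules this out is the orientation hypothesis: $C_i$ or $R_{i+1}$ matched to the sector of $b$. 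You can invoke it either through the side-entry lemma's stated conclusion that $a$ lies in the source component of $T_\theta-\{f\}$, or through the direct priority check you sketch. Cite one of these rather than the interval argument.
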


\begin{proof}
The outward cone of the second fault's descendants is separated from the first suffix's boundary lines by at least one full sector. Therefore deleting the other fault can block the parent path from the side-entry endpoint only if that endpoint lies on the boundary of the other fault's own detached cone. If it were in a third component, the unique parent path would have to cross two distinct sector boundaries without passing through either deleted vertex, contradicting the coordinate-reduction parent rule.
\end{proof}

\begin{lemma}[Same-ray and opposite-ray faults]
\label{lem:same-opposite}
If two faults lie on the same EJ ray or on two opposite EJ rays, then some orientation in $\Theta$ admits a $(t+2)$-depth repair certificate.
\end{lemma}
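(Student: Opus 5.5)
We split into the same-ray and opposite-ray cases and, in each, exhibit an explicit orientation in $\Theta$ together with an ordering of the detached components satisfying the certificate condition \eqref{eq:cert} with $K=t+2$. Write the faults as $f=r\delta_i$ and $g=r'\delta_j$ with $1\le r\le r'\le t$. The ray $\delta_i$ is the boundary between the sectors $(\delta_{i-1},\delta_i)$ and $(\delta_i,\delta_{i+1})$.

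\emph{Same ray ($j=i$).} Choose the orientation $C_i$; its first priority is $\delta_i$, so the ray itself is a parent chain inside $T_{C_i}$. Deleting $f$ detaches a suffix of layer at least $r+1$. If $r'=r+1$, the fault $g$ is the unique ray child of $f$. Deleting $g$ may split this suffix further into a piece hanging below $g$ (layers $\ge r'+1$) and the off-ray children of $f$.

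We attach the components in order of increasing layer of their entry vertices. First, the pieces cut off by $f$ alone are $r$-sector suffixes. By the sector side-entry lemma their entries have lateral neighbors $a$ of layer $r$ or $r+1$ outside the suffix. Because $g$ lies on the ray at layer $r'>r$ and not on the lateral gate, $a$ is in $C_s$ with depth at most $r+1$. The sector suffix attachment lemma then gives value at most $t+1$.

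Second, the piece below $g$ is an $r'$-sector suffix. Its side-entry neighbor $a'$ has layer $r'$ or $r'+1$ and lies off the ray. Either $a'\in C_s$ already, with depth at most $r'+1$, or $a'$ lies in a component attached in the first step. In the latter case its depth is at most one more than in the original tree, because that component hangs one edge lower than its original parent. So $d(s,a')\le r'+2$, and the second clause of the sector suffix attachment lemma gives $t+2$.

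\emph{Opposite rays ($j=i+3$).} The descendant cones of $f$ and $g$ lie in opposite half-planes, separated by the line through $s$ perpendicular to $\delta_i$. We use the orientation $A_i$, whose first two priorities are $\delta_i$ and $\delta_{i+3}$. Under $A_i$ both rays are parent chains, and each cut detaches only suffixes on its own side.

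For each detached suffix we invoke the sector side-entry lemma. Its lateral gate changes one strip coordinate by $\pm1$, so it stays on the same side of the separating line and cannot equal the opposite fault. The gate's parent path decreases the layer toward $s$. It can meet the opposite fault only by crossing the separating line, which happens only at $s$ itself. Hence the gate lies in $C_s$ with depth at most $\rho+1$, and every certificate value is at most $t+1\le t+2$.

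\emph{Main obstacle.} The difficult step is the nested same-ray case. There we must show that the side-entry gate of the deeper suffix lies either in $C_s$ or in an already-attached shallow suffix at depth at most $r'+2$. It must never lie in the suffix below $g$ itself. We would verify this using the six-row strip table in the sector side-entry lemma. The lateral gate has a strip coordinate equal to $\pm(r'+1)$ or $\pm r'$ with the wrong sign for the ray. Its coordinate-reduction parent path therefore reaches layer $r'$ without passing through $g$. If the path passes through $f$, then the gate belongs to a first-step component. Its depth bound follows because that component was attached at its layer-$(r+1)$ entry through a gate of depth at most $r+1$. This preserves the depth of every vertex in the component up to $+1$.
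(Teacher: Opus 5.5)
Your argument is correct at the level of rigor of the surrounding lemmas. Your opposite-ray case is the paper's: the paper also uses the opposite-pair orientation $A_0$ with a transverse entry for each tail, and you even obtain $t+1$. The real difference is the same-ray case.

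The paper stays with $A_0$ there. It enters the bounded interval from a transverse neighbor at depth at most $a+1$ and the outer suffix from a transverse neighbor at depth at most $b+1$. It reaches $t+2$ only by charging eccentricity $t-b$ to the outer suffix. You switch to $C_i$ and use a two-stage ordering in which the outer suffix may hang off an already-attached inner piece at inflated depth $r'+2$. That bookkeeping is built to cope with off-ray children of $f$ and $g$, but it is exactly what creates your ``main obstacle.'' The strip-sign sketch you offer for that obstacle is not a valid reason. Whether a layer-$(r'+1)$ lateral neighbor $r'\delta_i+\delta_{i\pm1}$ of the entry is a child of $g$ is decided by the tie-break in its sector, not by the signs of its strip coordinates.

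In fact the obstacle never arises. Take the paper's convention: the priority ranks the direction from $p(v)$ to $v$, which the worked example with $f=(2,1)$ under $C_0$ forces. Then any orientation listing $\delta_i$ ahead of both $\delta_{i-1}$ and $\delta_{i+1}$ (this includes $C_i$ and $A_i$) makes the two sectors adjacent to $\delta_i$ step parallel to $\delta_i$. Consequently every vertex of that ray has only its ray child, and the detached components are just ray segments.

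The root $m\delta_i$ of each segment has lateral neighbors $(m-1)\delta_i+\delta_{i\pm1}$ of depth $m$. Their parent paths run parallel to $-\delta_i$ to the ray $\delta_{i\pm1}$ and then straight to $s$, avoiding both faults. So your route and the paper's both give value at most $t+1$ for same-ray pairs, with no depth-inflation step needed.

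The same fact is what actually justifies your opposite-ray step. Your assertion that parent paths cross the separating line only at $s$ is false for $A_i$ in general: under $A_0$, the path $(2,-3)\to(1,-2)\to(0,-1)$ crosses it away from $s$. It does hold for these particular gate paths, which stay in the two sectors adjacent to $\delta_i$.
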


\begin{proof}
By symmetry take the ray to be $\delta_0=E$. For same-ray faults at layers $1\le a<b\le t$, use the alternating orientation $A_0=(E,W,SE,NW,S,N)$. The bounded interval between the cuts is entered from a transverse neighbor at depth at most $a+1$ and has internal eccentricity at most $b-a-1$, giving value at most $b+1\le t+1$. The outer suffix is entered at depth at most $b+1$ and has eccentricity at most $t-b$, giving value at most $t+2$.

For opposite-ray faults, the same opposite-pair priority separates the two tails. Each tail has a transverse entry with value at most $t+2$.
\end{proof}

\begin{lemma}[Separated-sector faults]
\label{lem:separated}
If two faults lie in non-adjacent sectors and are not on opposite rays, then some orientation in $\Theta$ admits a $(t+2)$-depth repair certificate.
\end{lemma}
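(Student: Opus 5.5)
We prove Lemma~\ref{lem:separated} by choosing one orientation adapted to the fault whose suffix is harder to reach, and then exhibiting an explicit $(t+2)$-depth certificate in the sense of the definition above. Let $f$ and $g$ be the two faults, with layers $r_f=\rho(f)$ and $r_g=\rho(g)$, lying in non-adjacent sectors $S_i$ and $S_k$. We pick the cyclic orientation $C_i$ (or $R_{i+1}$) that exposes the two boundary directions of the sector of $f$, exactly as in the proof of the one-fault theorem. Since the sectors are non-adjacent, this orientation restricted to the outward cone of $g$ is still a coordinate-reduction rule. Hence the components detached by $g$ are $r_g$-sector suffixes as well, though possibly with respect to a non-exposed pair of gates.

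First we classify the components of $T_\theta-\{f,g\}$. By Lemma~\ref{lem:nonadjacent-separation}, the descendant cones of $f$ and $g$ are disjoint, because the opposite-ray boundary case is excluded. So each non-source component is either an $r_f$-sector suffix hanging below $f$ or an $r_g$-sector suffix hanging below $g$. Each satisfies the eccentricity bound $\Ecc_C(b)\le t-\rho-1$ from the suffix definition. For every suffix $C$ below $f$, the sector side-entry lemma (Lemma~\ref{lem:sector-side-entry}) supplies a neighbor $a$ of its entry $b$ with $\rho(a)\in\{r_f,r_f+1\}$. By Lemma~\ref{lem:nonadjacent-separation}, $a\neq g$, so $a$ lies in the source component at original depth $\rho(a)\le r_f+1$. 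The sector suffix attachment lemma then gives value at most $t+1$.

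Next we treat the suffixes below $g$. For the orientation fixed above, the side-entry lemma may not apply directly to $g$, since its first two priorities need not match the sector $S_k$. Consider an entry $b'$ of a $g$-suffix at layer $r_g+1$, and take one of its lateral neighbors $a'$ at layer $r_g$ or $r_g+1$ outside the cone of $g$; such a neighbor exists because the cone of a single child is a monotone interval. By Lemma~\ref{lem:nonadjacent-components}, $a'$ is either in the source component, with depth at most $r_g+1$ and value at most $t+1$, or in a component detached by $f$. In the latter case we order the certificate so that all $f$-suffixes are attached first. Then $a'$ has repaired depth at most $(r_f+1)+1+\operatorname{dist}_C(b,a')$. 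For this to stay below $r_g+2$, we need the cut-off part of the $f$-cone touching the $g$-cone to satisfy the bound $d_{T^{(j-1)}}(s,a')\le r_g+1$. We would argue this from the fact that along the shared sector boundary the $f$-suffix realizes geodesic layer depths plus one. The sector suffix attachment lemma with $d\le r_g+2$ then gives at most $t+2$. Finally, the certificate-implies-bounded-depth lemma and the certified repair selection lemma convert the certificate into a tree of depth at most $t+2$ returned by Algorithm~1.

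The main obstacle is the last case: bounding the repaired depth of a side-entry vertex $a'$ for $g$ that itself sits inside an $f$-suffix. The key claim is that vertices of an $f$-suffix on the boundary toward $S_k$ have repaired depth at most $\rho+1$. We would try to prove this by showing that the side entry of the $f$-suffix is on the side facing $g$. If that fails, we would switch to the orientation exposing $S_k$ instead, with the roles of $f$ and $g$ swapped. We expect that at least one of the two choices avoids the chained case, since the two sectors are separated by a full sector on each side.
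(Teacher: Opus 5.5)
Your plan follows the paper's proof. You fix a cyclic or reverse orientation adapted to one fault $f$. You attach $f$'s suffixes first through side entries in the source component, with value at most $t+1$ and Lemma~\ref{lem:nonadjacent-separation} excluding $g$. Then you attach $g$'s suffixes, with Lemma~\ref{lem:nonadjacent-components} confining any obstruction to the already attached $f$-components. The gap is that you never close the chained case, which is exactly where the $t+2$ comes from. You leave it as something you `would argue'. You add a fallback (swapping $f$ and $g$) supported only by the expectation that one choice avoids chaining; nothing shows this, and the swap just produces the mirror-image chained case. In the middle paragraph you also ask for $d_{T^{(j-1)}}(s,a')\le r_g+1$, which is stronger than needed: the sector suffix attachment lemma only requires $r_g+2$.

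The missing observation needs no geometry about which side the entry faces. A component detached by $f$ is the subtree of $T_\theta$ rooted at a child of $f$ (the cones being disjoint). So its entry $b$, its unique vertex of layer $r_f+1$, is its root. For every $w$ in it, the internal path from $b$ to $w$ is the reversed parent path, of length $\rho(w)-r_f-1$. After attaching through $(a,b)$ with $d_{T^{(j-1)}}(s,a)\le r_f+1$, every such $w$ has repaired depth at most $(r_f+1)+1+(\rho(w)-r_f-1)=\rho(w)+1$. This is the `key claim' of your last paragraph, and it holds for every vertex of the suffix, not only those facing $g$. Since $\rho(a')\le r_g+1$, you get $d_{T^{(j-1)}}(s,a')\le r_g+2$ and certificate value at most $(r_g+2)+1+(t-r_g-1)=t+2$. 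That is precisely the paper's `attached first, then at most $t+2$'.

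A smaller point concerns the existence of $a'$. The sector side-entry lemma is not available for $g$'s suffixes, because your orientation is matched to $f$'s sector, so your `monotone interval' reason borrows a hypothesis you do not have. A direct reason: a child $b'$ of $g$ not on a ray has a second inward neighbor of layer $r_g$, which is neither $g$ nor in $g$'s cone. For a ray child, you must check that some lateral neighbor is not itself a child of $g$ under the chosen orientation. Otherwise, chain through those sibling components, which do have inward gates; this still stays within $t+2$.
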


\begin{proof}
Choose the cyclic or reverse orientation exposing the boundary farther from the second fault. By Lemma~\ref{lem:nonadjacent-separation}, the side-entry neighbor for the first suffix cannot be the second fault. For the second suffix, by Lemma~\ref{lem:nonadjacent-components}, the only possible obstruction is the other cut's component, which is attached first. The first attachment has value at most $t+1$; after it is attached, the second has value at most $t+2$.
\end{proof}

\begin{lemma}[Adjacent-sector faults]
\label{lem:adjacent}
If two faults lie in adjacent sectors sharing a boundary ray, then some orientation in $\Theta$ admits a $(t+2)$-depth repair certificate.
\end{lemma}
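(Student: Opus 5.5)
We reduce by the dihedral symmetry of $H_t$ (rotations by $\omega$ together with the reflection $(x,y)\mapsto(y,x)$, which carry the family $\Theta$ into itself up to relabelling $C_i\leftrightarrow R_j$). This lets us assume the two faults $f,g$ lie in the adjacent sectors $S_0$ (between $\delta_0$ and $\delta_1$) and $S_1$ (between $\delta_1$ and $\delta_2$), sharing the ray $\delta_1$. Faults lying exactly on a ray have already been handled by Lemma~\ref{lem:same-opposite} when both are on one ray or on opposite rays. If only one of them is on a ray, we assign it to the sector closer to the other fault, so that the two sectors are still adjacent. Write $r=\rho(f)\le \rho(g)=r'$. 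The plan is to split according to whether the descendant cones of $f$ and $g$ interact across the shared ray $\delta_1$. In each case we produce an explicit $(t+2)$-depth repair certificate in the sense of the certificate definition. We then invoke the certified repair selection lemma to conclude.

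\emph{Case 1: the cones are disjoint and neither fault blocks the other's side entry.} Here we choose the orientation that exposes the \emph{outer} boundary ray of each fault's sector, away from the shared ray. For $f\in S_0$ this is $R_1$ or $C_0$ with the $\delta_0$ gate preferred; for $g$ it is the symmetric choice. The sector side-entry lemma then supplies, for each detached $r$-sector suffix, a healthy lateral neighbor $a$ with $\rho(a)\in\{r,r+1\}$. Because the chosen gate faces away from the other fault, the argument of Lemma~\ref{lem:nonadjacent-separation} applies with the sector separation replaced by the width of the sector containing the first fault. That argument shows $a$ lies in the source component. Hence $a$ has depth at most $r+1$, and by the sector suffix attachment lemma each suffix attaches with value at most $t+1$.

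The difficulty is that a single orientation must serve both faults, while $C_i$ and $R_{i+1}$ fix a direction globally. So we instead use the cyclic orientation matched to the sector of the \emph{inner} fault $f$. For $g$ we accept an entry that may pass through $f$'s already-attached suffix: that entry sits at depth at most $r'+2$ in $T^{(1)}$, giving value at most $t+2$ by the second clause of the sector suffix attachment lemma.

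\emph{Case 2: the cones interact.} This means $g$ lies in the descendant cone of $f$, or $g$ is itself the side-entry endpoint of $f$'s suffix, or $g$ lies on the shared ray adjacent to it. We handle it like the same-ray argument of Lemma~\ref{lem:same-opposite}. If $g$ is a descendant of $f$, the component between the two cuts is an $r$-sector suffix truncated at layer $r'$. Its eccentricity from the entry is at most $r'-r-1$, and entering it at depth $\le r+1$ gives value $\le r'+1\le t+1$. The outer suffix below $g$ is then entered laterally at depth $\le r'+2$ in the partial tree, possibly through the middle component, which gives value $\le t+2$. If instead $g$ blocks $f$'s side entry, we switch to the orientation using the opposite gate ($R_{i+1}$ in place of $C_i$). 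The side-entry lemma guarantees that at least one of the two interval endpoints is outside $C$, and $g$ can occupy only one of them. Attaching $g$'s suffix first and then $f$'s gives the order needed for the depth bound.

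The main obstacle is Case 2 when $g$ sits on or next to the shared ray $\delta_1$ at the boundary of $f$'s descendant interval. There, switching orientation changes which vertices are descendants of $f$, so the component structure itself changes. I would first try to settle this with a direct six-direction shielding check: enumerate the signed strip-side cases of the table in the sector side-entry lemma with $g$ placed at each of the two layer-$r$ or layer-$(r+1)$ gate positions. For each placement, verify that one of $C_0,R_1,C_1,R_2$ leaves an unblocked gate whose parent path avoids both faults. If that fails, fall back on $A_1$, whose alternating priorities separate the two sides of ray $\delta_1$.
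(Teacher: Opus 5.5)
Your proposal does not yet prove the lemma. The configuration you call the main obstacle is left as ``I would first try \ldots\ If that fails, fall back on $A_1$''. That configuration is one fault on or next to the shared ray $\delta_1$, at the edge of the other fault's descendant interval. So for exactly the placements where the two faults can interact, you exhibit no orientation and no certificate.

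The steps you do carry out also have holes.
\begin{itemize}
\item In Case~1 you fix the cyclic orientation matched to $f$'s sector, say $C_0$ or $R_1$ for $f\in S_0$. But the sector side-entry lemma covers a suffix in $S_1$ only under $C_1$ or $R_2$. Nothing therefore supplies a neighbour of $g$'s entry of layer at most $r'+1$ lying in $C_s$ or in $f$'s attached suffix. The bound ``depth at most $r'+2$ in $T^{(1)}$'' is asserted, not derived.
\item Transplanting Lemma~\ref{lem:nonadjacent-separation} ``with the sector separation replaced by the width of the sector'' does not help. Its premise, a full sector between the second fault and the side-entry boundaries, is absent here. For example, $f=(2,-1)\in S_0$ and $g=(1,-2)\in S_1$ are at distance $2$ and share the neighbours $(1,-1)$ and $(2,-2)$ on $\delta_1$.
\item In Case~2, ``$g$ can occupy only one of them'' does not follow. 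The side-entry lemma guarantees only that \emph{at least one} endpoint neighbour lies outside $C$. If $g$ is that one, nothing is left. Switching to $R_{i+1}$ rebuilds the components, so your earlier analysis no longer applies.
\item Your on-ray convention fails in the delicate case. If $g$ lies on $\delta_1$, ``the sector closer to the other fault'' is $S_0$ itself. The pair is then no longer adjacent, and the shared-ray placements drop out of the lemma.
\end{itemize}

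The missing idea is the paper's: commit from the start to the single alternating orientation centred on the shared ray. In the paper's normalization (shared ray $\delta_0$) this is $A_0$; in yours it is $A_1$. Do this instead of using a cyclic orientation matched to one sector.

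Each of $C_0,R_1,C_1,R_2$ makes one flanking sector hang entirely off $\delta_1$. Under $C_0$, for instance, every vertex of $S_0$ steps by $-\delta_0$ back to $\delta_1$. A fault on $\delta_1$ then cuts off a comb that can contain the other fault, and that is the source of your obstacle.

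Under $A_1$, both $S_0$ and $S_1$ step by $-\delta_1$, and the shared ray carries no side branches. Inside a sector the parent choice is binary, and $A_1$ makes the same choice as $R_1$ on $S_0$ and as $C_1$ on $S_1$. So the side-entry argument carries over to both sides of one and the same tree.

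The paper then needs only two cases. If both side entries are healthy, each suffix attaches with value at most $t+1$ by the sector suffix attachment lemma. If one fault is the other's side entry, first attach the suffix whose outer side is unblocked (value at most $t+1$), then attach the other through it (value at most $t+2$). Your Case~1/Case~2 split and the orientation switching become unnecessary.
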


\begin{proof}
Normalize the common boundary to be $\delta_0=E$. Use orientation $A_0=(E,W,SE,NW,S,N)$. If both side entries are healthy, the sector suffix attachment lemma gives value at most $t+1$ for both. If one entry is the other fault, attach the component with the unblocked outer side first with value at most $t+1$; after it is attached, the second component has value at most $t+2$.
\end{proof}

\begin{lemma}[Boundary microcomponents]
\label{lem:boundary-micro}
If a detached component is represented only through a quotient wraparound edge at the boundary of $H_t$, then it still has a $(t+2)$ attachment.
\end{lemma}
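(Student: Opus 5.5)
The plan is to show that a component reached only through a wraparound edge is a small piece of a sector suffix lying in the outer layers, and then to exhibit a second, in-ball crossing edge into it whose predicted value is bounded. First I will characterize such a component $C$. Every vertex of $C$ has layer at least $\rho(f)+1$ for the fault $f$ that detached it. Its parent paths strictly decrease $\rho$ and stop at a child of a deleted vertex. So $C$ is a union of parent paths contained in the descendant cone of a single fault. If $C$ is adjacent to other components only across the quotient identification, then each lateral neighbor inside $H_t$ is either faulty or lies in $C$ itself. With at most two faults, this forces $C$ to be tiny: it lies in layer $t$ (or layers $t-1,t$) next to a boundary strip-intersection vertex. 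Such a vertex is a corner of $H_t$ where two of $|X|,|Y|,|Z|$ equal $t$, so $\Ecc_C(b)\le 1$ for any $b\in C$.

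Next I will bound the depth of the repaired-side endpoint $a$ of the wraparound edge $(a,b)$. Under the identification modulo $\alpha=(t+1)+t\omega$, a neighbor of a layer-$t$ vertex that leaves $H_t$ is identified with a vertex on the opposite side of the hexagon. That vertex also has canonical layer $t$, or $t-1$ near a corner. I will verify this by writing the wraparound shifts explicitly as the six associates $\pm\alpha,\pm\omega\alpha,\pm\omega^2\alpha$ and checking their strip coordinates. Then $a$ has depth at most $t$ in $T_\theta$. If $a$ lies in the source component, the original depth bound from the fault-free orientation tree lemma applies. Whenever $\Ecc_C(b)\le 1$ and $d(s,a)\le t$, we get the value $d(s,a)+1+\Ecc_C(b)\le t+2$, which is exactly the desired attachment. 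In the subcase $\Ecc_C(b)=0$ with $d(s,a)=t$, the value is already $t+1$.

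The main obstacle is the case where the wraparound endpoint $a$ is not in the source component but in the other detached component $C'$. In that case the depth of $a$ after repair might exceed $t$. My first attempt will follow the ordering trick from Lemma~\ref{lem:separated}. I will attach $C'$ first, using the sector side-entry lemma and the sector suffix attachment lemma, with value at most $t+1$. Then I will argue that the corner vertex $a\in C'$ sits at the far end of $C'$, where its eccentricity is realized. Its new depth is then at most $t+1$ only if $C'$'s entry had depth $\le r'+1$ and $a$ has layer $t$. This requires a sharper accounting, namely $d(s,a)\le r'+1+1+(\rho(a)-r'-1)$. If this gives $t+1$ and $\Ecc_C(b)=0$, the bound is $t+2$.

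If the accounting fails when $\Ecc_C(b)=1$, I will instead switch to a different orientation in $\Theta$: the reversed priority $R_{i+1}$ in place of $C_i$, or the appropriate $A_i$. This swaps which of the two inward gates at the corner is the parent, so $C$ becomes attached to the source component by an ordinary tree edge. Using the six signed strip-side cases from the sector side-entry lemma table, I will check case by case that one of the orientations eliminates the microcomponent. The remaining ones are handled by the depth bound above.
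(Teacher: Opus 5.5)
Your main line is the paper's proof. $C$ consists of one or two consecutive boundary vertices, so $\Ecc_C(b)\le 1$. The wraparound endpoint $a$ lies in the original source component, where its tree path from $s$ avoids $F$ and its depth is $\rho(a)\le t$. The value is therefore at most $t+2$, and at most $t+1$ for a singleton. The paper asserts the one-or-two-vertex shape outright and simply places $a$ in $C_s$; it never considers $a\in C'$.

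What you add beyond this does not hold up.

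\textbf{The branch $a\in C'$ is left open.} After attaching $C'$, your own accounting gives only $d(s,a)\le t+1$. The value is then $t+2+\Ecc_C(b)$, which is $t+3$ when $\Ecc_C(b)=1$. Your fallback assumes a corner of $H_t$ has two inward gates whose order an orientation can swap. A corner (two of $|X|,|Y|,|Z|$ equal to $t$) has exactly one: the in-ball neighbors of $(t,0)$ are $(t-1,0)$, $(t-1,1)$, $(t,-1)$, and only $(t-1,0)$ has layer $t-1$. Two gates occur only at non-corner side vertices, so no orientation in $\Theta$ changes a corner's parent. The promised case check therefore has no mechanism behind it. Moreover, switching orientation changes every component, so the whole configuration would have to be re-verified.

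\textbf{Your derivation of tininess is unsupported, but it points to a resolution under your reading.} Suppose every in-ball neighbor of $C$ outside $C$ were faulty. Then $F$ would separate $C$ from $s$ in the in-ball graph on $H_t$. That graph is a triangulated hexagon whose outer $6t$-cycle has no chords, so it is $3$-connected. Hence with $|F|\le 2$ this situation never occurs. Your opening plan to exhibit an in-ball crossing edge is also incompatible with the hypothesis as you read it.

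Either this $3$-connectivity observation or the paper's convention $a\in C_s$ removes the $a\in C'$ branch. As written, your proposal uses neither and leaves that case unresolved.

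A minor point: a wraparound endpoint always has layer exactly $t$, never $t-1$, since the translates of $H_t$ by multiples of $\alpha$ tile the lattice. This does not affect your bound.
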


\begin{proof}
A boundary microcomponent has one or two consecutive boundary vertices, so $\Ecc_C(b)\le1$. The repaired-side endpoint $a$ is a vertex of the original source component with depth at most $t$. If $C$ is a single boundary vertex, the certificate value is at most $t+1$. If $C$ has two consecutive boundary vertices, the value is at most $t+2$.
\end{proof}

\begin{theorem}[EJ-MOEM $t+2$ depth theorem]
\label{thm:depth}
For every dense EJ network $H_t$ with $t\ge2$ and every fault set $F\subseteq H_t\setminus\{s\}$ with $|F|\le2$, the EJ-MOEM orientation family contains a valid repair of depth at most $t+2$. If $|F|=1$, depth at most $t+1$ is sufficient.
\end{theorem}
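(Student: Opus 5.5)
The theorem follows by assembling the certificate lemmas already established; what remains is to check that the case split is exhaustive and that each case yields a certificate that Algorithm~1 actually realizes.

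\textbf{Reduction to certificates.} For each fault set $F$, I would exhibit one orientation $\theta\in\Theta$ whose component graph $\mathcal{C}_\theta$ is connected and which admits a $K$-depth repair certificate, with $K=t+1$ when $|F|=1$ and $K=t+2$ when $|F|=2$. Connectivity is automatic once a certificate exists, since a certificate attaches every non-source component through a healthy crossing edge. The lemma ``Certificate implies bounded depth'' and the ``Certified repair selection'' lemma then imply that Algorithm~1, run on that $\theta$, returns a valid candidate of depth at most $K$. By the orientation-dominance remark, the EJ-MOEM selection (success first, then depth) is no worse. The case $F=\emptyset$ is trivial by the fault-free orientation lemma, which gives depth at most $t$.

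\textbf{The case $|F|=1$.} This is exactly the one-fault EJ depth theorem. Let $r=\rho(f)$ and pick $C_i$ or $R_{i+1}$ matching the sector of $f$. Every detached component is an $r$-sector suffix. The sector side-entry lemma supplies a healthy neighbour $a$ of the entry $b$, lying in the source component, with $\rho(a)\in\{r,r+1\}$. Hence $d(s,a)\le r+1$, and the sector suffix attachment lemma gives value at most $t+1$.

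\textbf{The case $|F|=2$.} Write $F=\{f,g\}$ and classify the pair by the sector geometry of $H_t$, using the six-fold rotational symmetry to normalize:
\begin{itemize}
\item[(i)] same ray or opposite rays: Lemma~\ref{lem:same-opposite};
\item[(ii)] non-adjacent sectors, not on opposite rays: Lemma~\ref{lem:separated};
\item[(iii)] adjacent sectors sharing a boundary ray, or the same sector: Lemma~\ref{lem:adjacent}.
\end{itemize}
Case (iii) needs care. Two faults strictly inside one sector are not literally covered by Lemma~\ref{lem:adjacent}. I would handle them with the same argument: choose the cyclic orientation whose side entries point away from the other fault. If one fault lies in the other's suffix, or blocks its side entry, attach the unobstructed component first (value $\le t+1$). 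Its vertices adjacent to the blocked entry then have depth at most $r+2$, and the suffix attachment lemma gives $t+2$.

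Two further adjustments apply throughout. Components that arise only from quotient wraparound at $\partial H_t$ are absorbed by Lemma~\ref{lem:boundary-micro}. The hypothesis $t\ge2$ excludes degenerate small hexagons in which two faults can isolate a vertex, for example by removing all inward neighbours.

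\textbf{Main obstacle.} The hard step is exhaustiveness: every detached component must be an $r$-sector suffix or a boundary microcomponent, and the case list must cover every relative position, including faults on sector rays, which belong to two sectors at once. I would first verify that every vertex of $H_t\setminus\{s\}$ lies in some closed sector and that every pair falls into (i)--(iii) once rays are treated as shared. The delicate configuration is when $g$ is itself a descendant of $f$ in the chosen tree, so that $f$'s suffix is split into a bounded interval and an outer tail. I would then mimic the same-ray argument: enter the bounded piece transversely at depth $\le \rho(f)+1$, then enter the outer tail from the repaired interval or a lateral neighbour at depth $\le\rho(g)+1$. This gives values $\le t+1$ and $\le t+2$ respectively.
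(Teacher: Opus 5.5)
Your proposal follows the paper's proof. It reduces to a $K$-depth certificate via the two certificate lemmas and orientation dominance, takes $|F|=1$ from the one-fault theorem, and routes two-fault pairs to Lemmas~\ref{lem:same-opposite}, \ref{lem:separated}, \ref{lem:adjacent} and~\ref{lem:boundary-micro}.

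Where you depart from the paper, you are more careful than it is. The paper calls its case list exhaustive after dihedral normalization. But Table~\ref{tab:case-audit} classifies positions only as ray-resident, sector indices differing by $2,3,4$, or sectors sharing a ray. So two faults strictly inside one sector fall under none of the cited lemmas, exactly as you say. Normalizing by rotations only is also the right choice: $\Theta$ is rotation-invariant as a set of trees, but the mirror image of the $A_0$ tree is not in $\Theta$.

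\textbf{Your same-sector patch.} It is only as detailed as Lemma~\ref{lem:adjacent}. Its key step is that, after the unobstructed component is attached, the neighbours of the blocked entry have depth at most $r+2$. This is asserted, not derived. It needs two facts: the blocking neighbour is a vertex of the other fault's component of layer at most $r+1$ (not the fault itself), and that component was attached at its root with overhead at most one.

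There is a cleaner argument.
\begin{itemize}
\item Under $C_i$, every vertex $w$ strictly inside the sector bounded by $\delta_i,\delta_{i+1}$ has parent $w-\delta_i$ and the single child $w+\delta_i$. So every component cut off by an interior fault is a segment of a line $\{u\delta_i+m\delta_{i+1}: u\ge1\}$ with $m$ fixed.
\item The flanking lines $m\pm1$ reach $s$ through themselves and a bounding ray.
\item The second fault lies on at most one flanking line, and on neither in the nested case.
\item The clean flanking line contains a $C_s$-neighbour of the segment's root whose layer is at most that of the root.
\end{itemize}
Hence same-sector pairs, nested ones included, already have value at most $t+1$.

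\textbf{The remark about $t\ge2$ should be deleted.} Two faults cannot isolate a vertex of a $6$-regular graph. For $t=1$ the network is $K_7$ and nothing is detached. For every $t\ge2$, the faults $(1,0),(0,1)$ remove both inward neighbours of $(1,1)$. That is the paper's own two-fault example, and it must be handled by chained attachment, not excluded.
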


\begin{proof}
The case $|F|=0$ is the fault-free tree of depth at most $t$. The case $|F|=1$ is the one-fault depth theorem. For $|F|=2$: ray-resident fault pairs are assigned to Lemma~\ref{lem:same-opposite}; non-adjacent, non-opposite-ray sector pairs are assigned to Lemma~\ref{lem:separated}; adjacent-sector pairs are assigned to Lemma~\ref{lem:adjacent}; boundary quotient representatives are covered by Lemma~\ref{lem:boundary-micro}. These cases are mutually exclusive under the stated priority order and exhaustive after dihedral normalization. Hence some orientation in $\Theta$ admits a $(t+2)$ certificate, the component graph is connected, and Algorithm 1 returns a candidate of depth at most $t+2$.
\end{proof}

\begin{corollary}[Minimum repair count under the depth bound]
For every one- or two-fault placement covered by the preceding theorem, the returned EJ-MOEM tree satisfies the stated depth bound and uses exactly $c-1$ external component-repair edges for its selected fault-pruned orientation.
\end{corollary}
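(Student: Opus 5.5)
The plan is to combine the depth theorem (Theorem~\ref{thm:depth}) with the external repair-edge optimality theorem, and to check that the lexicographic selection of Algorithm~2 does not spoil either property. Fix a fault set $F$ with $|F|\le2$. By Theorem~\ref{thm:depth} there is an orientation $\theta^\ast\in\Theta$ admitting a $K$-depth repair certificate, where $K=t+1$ if $|F|=1$ and $K=t+2$ if $|F|=2$ (and depth $t$ with $c=1$ if $F=\emptyset$). A certificate attaches every non-source component through a healthy crossing edge. Hence the component graph $\mathcal{C}_{\theta^\ast}$ is connected.

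By the certified repair selection lemma, Algorithm~1 run on $T_{\theta^\ast}-F$ returns a valid candidate of depth at most $K$. This candidate has success flag set. Algorithm~2 minimizes lexicographically with success first and depth second. By the orientation dominance remark, the selected candidate therefore has success and depth at most $K$. This gives the depth part of the corollary.

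For the repair count, let $\hat\theta$ be the orientation actually selected. Its candidate is valid, which means it was produced by Algorithm~1 from a rooted component-level spanning choice of $\mathcal{C}_{\hat\theta}$. The existence of such a spanning choice forces $\mathcal{C}_{\hat\theta}$ to be connected. The external repair-edge optimality theorem then applies to $\hat\theta$. It shows that the returned tree is a non-redundant broadcast tree over $H_t-F$ using exactly $c-1$ external edges, where $c$ is the number of components of $T_{\hat\theta}-F$. The component lower bound shows that no repair preserving those components can use fewer.

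The main obstacle is to make sure the selected orientation really uses exactly $c-1$ edges, and not merely at most or at least that many. The concern is that Algorithm~1 might output a structure adding redundant crossing edges. I would handle this directly from the construction in step~4 of Algorithm~1. It adds one crossing edge per attached component, so it adds exactly $c-1$ edges. A valid output must also be acyclic with unique parents, so no extra edge can survive validation. A secondary subtlety is that $c$ refers to $\hat\theta$, which may differ from $\theta^\ast$. The corollary's phrase ``for its selected fault-pruned orientation'' signals that the count is taken relative to $\hat\theta$, so no comparison across orientations is needed.
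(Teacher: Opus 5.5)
Your proposal is correct and follows essentially the same route as the paper. The paper gives no separate proof and treats the corollary as an immediate consequence of Theorem~\ref{thm:depth}, whose proof ends by noting that the component graph is connected and Algorithm~1 returns a candidate of depth at most $t+2$, combined with the external repair-edge optimality theorem; your extra step of deducing connectivity of $\mathcal{C}_{\hat\theta}$ from the validity of the selected candidate, so that the $c-1$ count applies to the actually selected $\hat\theta$ and not only to the certified $\theta^\ast$, is a detail the paper glosses over, and you handle it correctly.
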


\begin{table}[H]
\centering
\caption{EJ depth-certificate templates used in the $t+2$ proof.}
\label{tab:templates}
\begin{adjustbox}{max width=\textwidth}
\begin{tabular}{p{0.22\linewidth}p{0.25\linewidth}p{0.25\linewidth}p{0.18\linewidth}}
\toprule
Fault signature & Orientation strategy & Detached component shape & Certificate bound \\
\midrule
One interior-sector fault & Choose $C_i$ or $R_i$ exposing a side entry & One monotone hexagonal suffix & $t+1$ \\
One boundary fault & Use transverse priority & One boundary arc or small cap & $t+1$ \\
Two faults on same ray & Use $A_i$ to expose transverse directions & One or two nested axial tails & $t+2$ \\
Two faults on opposite rays & Use $A_i$ to separate the tails & Two independent tails & $t+2$ \\
Two faults in separated sectors & Use cyclic or reverse priority & Two sector suffixes & $t+2$ \\
Two faults in adjacent sectors & Use $A_i$ around shared boundary ray & Two sector suffixes & $t+2$ \\
Boundary wraparound pair & Use quotient representative & One- or two-node microcomponent & $t+2$ \\
\bottomrule
\end{tabular}
\end{adjustbox}
\end{table}

\begin{table}[H]
\caption{Normalized two-fault sector-case audit.}
\label{tab:case-audit}
\centering
\begin{adjustbox}{max width=\textwidth}
\begin{tabular}{p{0.26\linewidth}p{0.35\linewidth}p{0.26\linewidth}}
\toprule
Class & Normalized condition & Covering result \\
\midrule
Ray-resident pair & Both faults on same ray or opposite rays & Lemma~\ref{lem:same-opposite} \\
Separated sectors & Sector indices differ by two, three, or four modulo six & Lemmas~\ref{lem:nonadjacent-components} and~\ref{lem:separated} \\
Adjacent sectors & Sectors share one boundary ray & Lemma~\ref{lem:adjacent} \\
Boundary representative & Side-entry edge via quotient boundary & Lemma~\ref{lem:boundary-micro} \\
\bottomrule
\end{tabular}
\end{adjustbox}
\end{table}

\begin{example}[One-fault depth certificate]
For $t=4$, let $f=(2,1)$ with $\rho(f)=3$. Choose orientation $C_0=(E,SE,S,W,NW,N)$. The first layer-4 entry in the descendant suffix is $b=(3,1)$; its side-entry neighbor is $a=(3,0)$, which has layer $3$ and lies in the source component. The repair edge is $((3,0),(3,1))$ with certificate value $3+1+0=4=t$.
\end{example}

\begin{example}[Two-fault depth certificate]
For $t=4$, suppose two faults lie in adjacent sectors: $f_1=(1,0)$ and $f_2=(0,1)$. The alternating opposite-pair orientation attaches the component with the unblocked outer side first with value at most $t+1$. After it is attached, the second component attaches with value at most $t+2$.
\end{example}

\begin{figure}[H]
\centering
\includegraphics[width=0.65\linewidth]{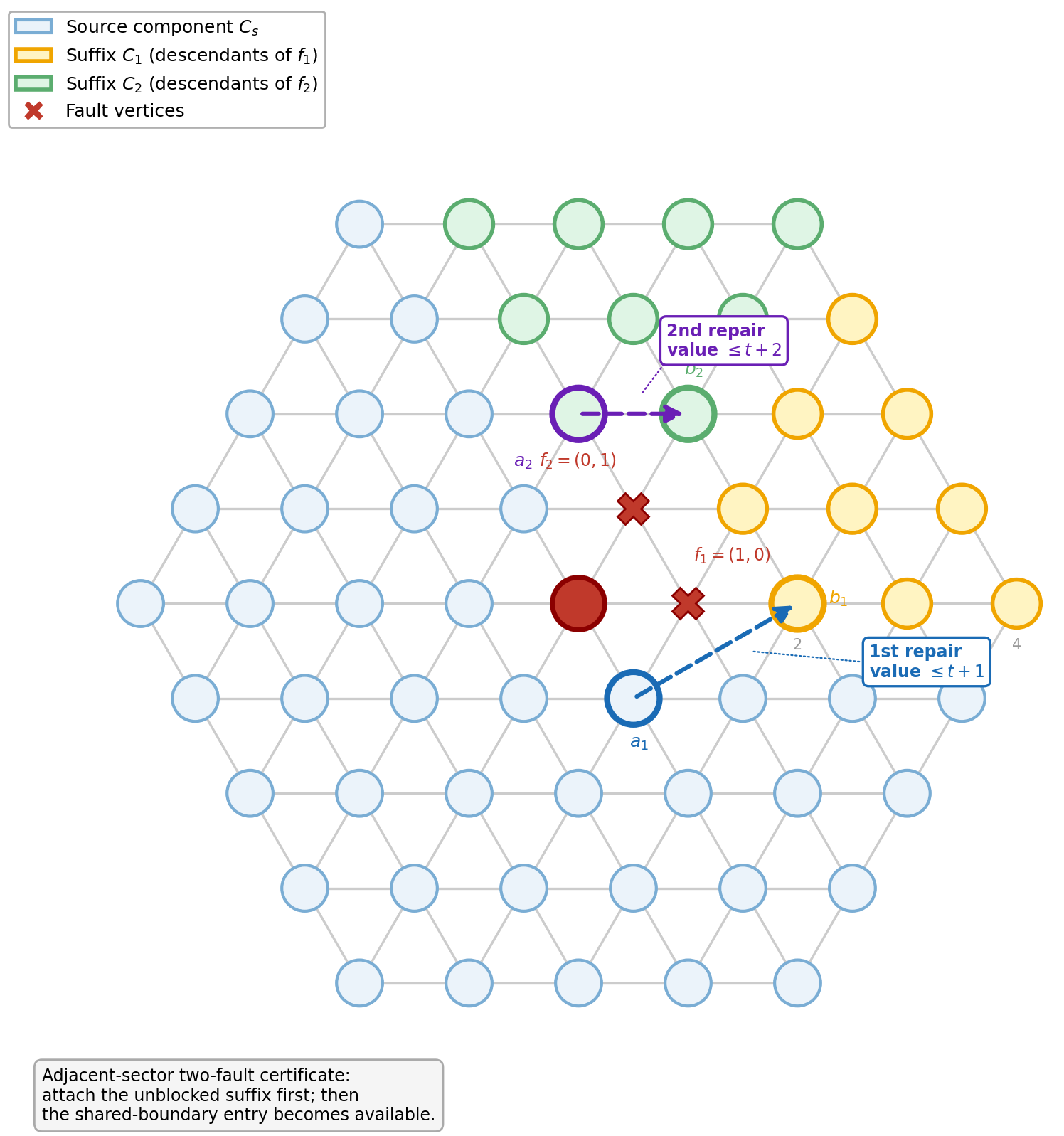}
\caption{Schematic adjacent-sector two-fault repair for $t=4$. After the unblocked suffix is attached, the shared-boundary entry for the second suffix becomes available.}
\label{fig:two-fault-schematic}
\end{figure}

\begin{example}[Complete two-fault repair for $t=3$]
Let $t=3$, $N=37$, $s=(0,0)$, and fault set $F=\{(1,0),(2,0)\}$. Use the alternating orientation $A_0=(E,W,SE,NW,S,N)$. After deleting the two faults, the only outer suffix beyond $(2,0)$ is $C_2=\{(3,0)\}$. The neighbors of $(3,0)$ include the healthy boundary vertex $(2,1)\in C_s$. Choose crossing edge $((2,1),(3,0))$. Since $\Ecc_{C_2}((3,0))=0$ and $d_{C_s}(s,(2,1))=3$, the certificate value is $3+1+0=4=t+1$. The repaired tree uses exactly one repair edge.
\end{example}

\section{Experimental Validation}
The validation suite checks the implementation against the proved invariants. Every reported case verifies fault exclusion, source reachability, parent uniqueness, acyclicity, number of external repair edges, number of components, final depth, and the selected orientation.

\subsection{Extended Exhaustive Validation}

\begin{table}[H]
\centering
\scriptsize
\caption{Extended exhaustive EJ-MOEM validation. Every one- and two-fault placement was tested for the listed diameters.}
\label{tab:exhaustive}
\begin{adjustbox}{max width=\textwidth}
\begin{tabular}{rrrrrrrrrr}
\toprule
$t$ & $N$ & $|F|$ & Cases & Success & Max depth & Max over. & Max repair & Max comp. & Min good \\
\midrule
2 & 19 & 1 & 18 & 100\% & 3 & 1 & 1 & 2 & 15 \\
2 & 19 & 2 & 153 & 100\% & 3 & 1 & 3 & 4 & 15 \\
3 & 37 & 1 & 36 & 100\% & 4 & 1 & 1 & 2 & 15 \\
3 & 37 & 2 & 630 & 100\% & 5 & 2 & 3 & 4 & 15 \\
4 & 61 & 1 & 60 & 100\% & 5 & 1 & 1 & 2 & 12 \\
4 & 61 & 2 & 1,770 & 100\% & 5 & 1 & 4 & 5 & 7 \\
5 & 91 & 1 & 90 & 100\% & 6 & 1 & 1 & 2 & 12 \\
5 & 91 & 2 & 4,005 & 100\% & 6 & 1 & 4 & 5 & 7 \\
6 & 127 & 1 & 126 & 100\% & 7 & 1 & 1 & 2 & 12 \\
6 & 127 & 2 & 7,875 & 100\% & 7 & 1 & 4 & 5 & 7 \\
7 & 169 & 1 & 168 & 100\% & 8 & 1 & 1 & 2 & 12 \\
7 & 169 & 2 & 14,028 & 100\% & 8 & 1 & 4 & 5 & 7 \\
8 & 217 & 1 & 216 & 100\% & 9 & 1 & 1 & 2 & 12 \\
8 & 217 & 2 & 23,220 & 100\% & 9 & 1 & 4 & 5 & 7 \\
9 & 271 & 1 & 270 & 100\% & 10 & 1 & 1 & 2 & 12 \\
9 & 271 & 2 & 36,315 & 100\% & 10 & 1 & 4 & 5 & 7 \\
10 & 331 & 1 & 330 & 100\% & 11 & 1 & 1 & 2 & 12 \\
10 & 331 & 2 & 54,285 & 100\% & 11 & 1 & 4 & 5 & 7 \\
11 & 397 & 1 & 396 & 100\% & 12 & 1 & 1 & 2 & 12 \\
11 & 397 & 2 & 78,210 & 100\% & 12 & 1 & 4 & 5 & 7 \\
12 & 469 & 1 & 468 & 100\% & 13 & 1 & 1 & 2 & 12 \\
12 & 469 & 2 & 109,278 & 100\% & 13 & 1 & 4 & 5 & 7 \\
14 & 631 & 1 & 630 & 100\% & 15 & 1 & 1 & 2 & 12 \\
14 & 631 & 2 & 198,135 & 100\% & 15 & 1 & 4 & 5 & 7 \\
16 & 817 & 1 & 816 & 100\% & 17 & 1 & 1 & 2 & 12 \\
16 & 817 & 2 & 332,520 & 100\% & 17 & 1 & 4 & 5 & 7 \\
18 & 1027 & 1 & 1,026 & 100\% & 19 & 1 & 1 & 2 & 12 \\
18 & 1027 & 2 & 525,825 & 100\% & 19 & 1 & 4 & 5 & 7 \\
\bottomrule
\end{tabular}
\end{adjustbox}
\end{table}

\subsection{Structured and Large-Scale Validation}

\begin{table}[H]
\centering
\scriptsize
\caption{Structured theorem-critical and sampled stress tests. All rows use two faults.}
\label{tab:structured}
\begin{adjustbox}{max width=\textwidth}
\begin{tabular}{rrlrrrrrr}
\toprule
$t$ & $N$ & Mode & Cases/trials & Success & Avg over. & Max over. & Avg repair & Max repair \\
\midrule
15 & 721 & same-ray & 630 & 100\% & 0.867 & 1 & 1.733 & 2 \\
15 & 721 & opposite-ray & 675 & 100\% & 0.996 & 1 & 1.867 & 2 \\
15 & 721 & adjacent near-source & 42 & 100\% & 1.000 & 1 & 2.143 & 3 \\
15 & 721 & adjacent sector grid & 216 & 100\% & 0.000 & 0 & 1.667 & 2 \\
15 & 721 & boundary micro & 702 & 100\% & 0.171 & 1 & 0.752 & 2 \\
15 & 721 & non-adjacent sector & 324 & 100\% & 0.000 & 0 & 1.667 & 2 \\
15 & 721 & uniform random & 5,000 & 100\% & 0.227 & 1 & 1.767 & 3 \\
20 & 1261 & same-ray & 1,140 & 100\% & 0.900 & 1 & 1.800 & 2 \\
20 & 1261 & opposite-ray & 1,200 & 100\% & 0.998 & 1 & 1.900 & 2 \\
20 & 1261 & adjacent near-source & 42 & 100\% & 1.000 & 1 & 2.143 & 3 \\
20 & 1261 & adjacent sector grid & 216 & 100\% & 0.000 & 0 & 1.667 & 2 \\
20 & 1261 & boundary micro & 942 & 100\% & 0.159 & 1 & 0.752 & 2 \\
20 & 1261 & non-adjacent sector & 324 & 100\% & 0.000 & 0 & 1.667 & 2 \\
20 & 1261 & uniform random & 5,000 & 100\% & 0.164 & 1 & 1.807 & 3 \\
25 & 1951 & same-ray & 1,800 & 100\% & 0.920 & 1 & 1.840 & 2 \\
25 & 1951 & opposite-ray & 1,875 & 100\% & 0.998 & 1 & 1.920 & 2 \\
25 & 1951 & adjacent near-source & 42 & 100\% & 1.000 & 1 & 2.143 & 3 \\
25 & 1951 & adjacent sector grid & 216 & 100\% & 0.000 & 0 & 1.667 & 2 \\
25 & 1951 & boundary micro & 1,182 & 100\% & 0.152 & 1 & 0.751 & 2 \\
25 & 1951 & non-adjacent sector & 324 & 100\% & 0.000 & 0 & 1.667 & 2 \\
25 & 1951 & uniform random & 5,000 & 100\% & 0.141 & 1 & 1.839 & 3 \\
30 & 2791 & same-ray & 2,610 & 100\% & 0.933 & 1 & 1.867 & 2 \\
30 & 2791 & opposite-ray & 2,700 & 100\% & 0.999 & 1 & 1.933 & 2 \\
30 & 2791 & adjacent near-source & 42 & 100\% & 1.000 & 1 & 2.143 & 3 \\
30 & 2791 & adjacent sector grid & 216 & 100\% & 0.000 & 0 & 1.667 & 2 \\
30 & 2791 & boundary micro & 1,422 & 100\% & 0.148 & 1 & 0.751 & 2 \\
30 & 2791 & non-adjacent sector & 324 & 100\% & 0.000 & 0 & 1.667 & 2 \\
30 & 2791 & uniform random & 5,000 & 100\% & 0.127 & 1 & 1.869 & 3 \\
\bottomrule
\end{tabular}
\end{adjustbox}
\end{table}

\begin{table}[H]
\centering
\scriptsize
\caption{Large random validation. Time is wall-clock seconds for the complete row.}
\label{tab:random}
\begin{adjustbox}{max width=\textwidth}
\begin{tabular}{rrrrrrrrrrrr}
\toprule
$t$ & $N$ & $|F|$ & Trials & Success & Avg depth & Max depth & Avg over. & Max over. & Avg repair & Max repair & Time (s) \\
\midrule
50 & 7651 & 1 & 1,000 & 100\% & 50.033 & 51 & 0.033 & 1 & 0.952 & 1 & 87.2 \\
50 & 7651 & 2 & 1,000 & 100\% & 50.084 & 51 & 0.084 & 1 & 1.910 & 2 & 122.1 \\
100 & 30301 & 1 & 1,000 & 100\% & 100.020 & 101 & 0.020 & 1 & 0.982 & 1 & 355.4 \\
100 & 30301 & 2 & 1,000 & 100\% & 100.036 & 101 & 0.036 & 1 & 1.961 & 2 & 487.5 \\
200 & 120601 & 1 & 1,000 & 100\% & 200.014 & 201 & 0.014 & 1 & 0.986 & 1 & 1477.7 \\
200 & 120601 & 2 & 1,000 & 100\% & 200.019 & 201 & 0.019 & 1 & 1.980 & 2 & 1981.8 \\
\bottomrule
\end{tabular}
\end{adjustbox}
\end{table}

\begin{figure}[H]
\centering
\includegraphics[width=0.75\linewidth]{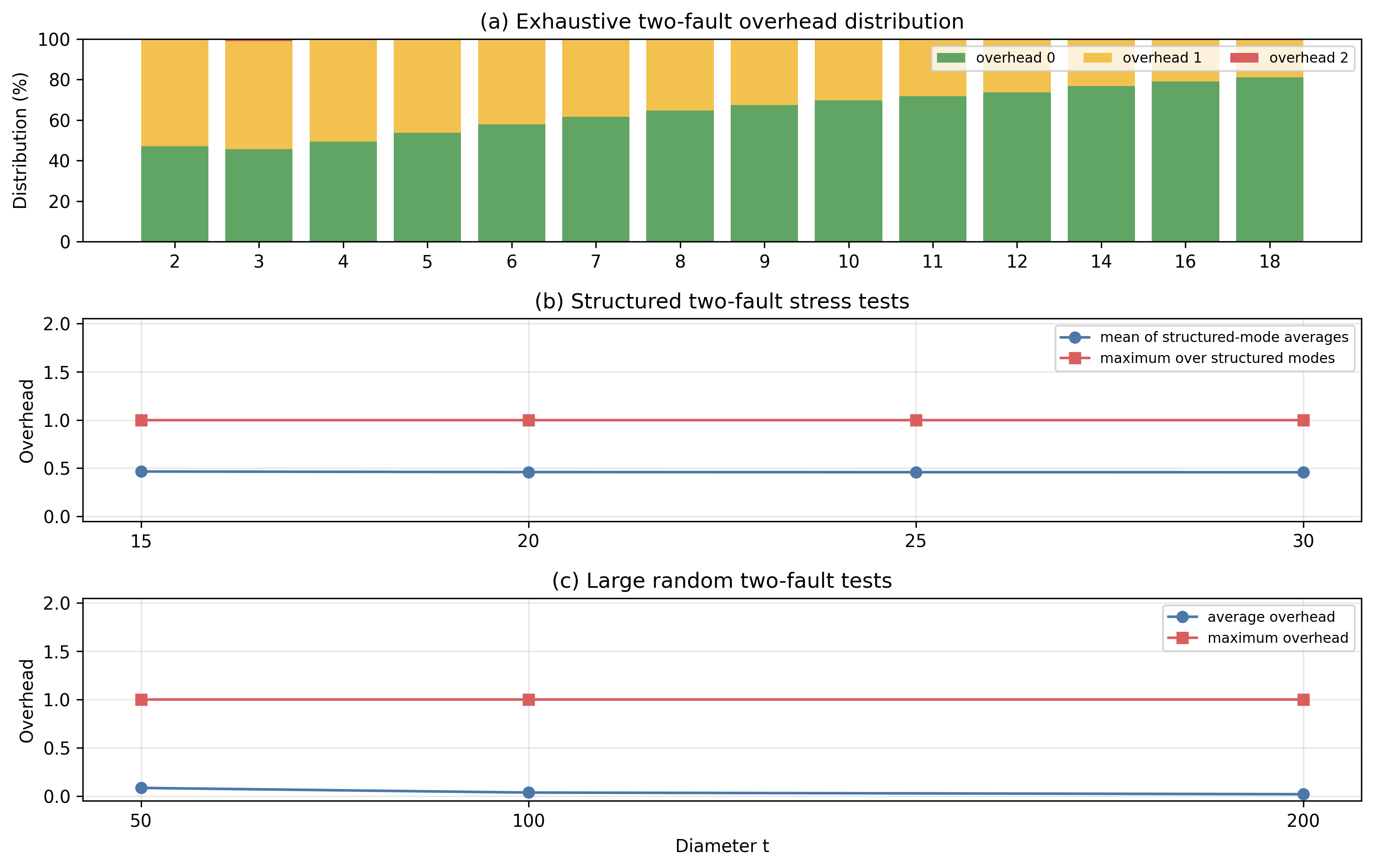}
\caption{Depth-overhead summary for the validation run. Panel (a) gives percentage distributions of exact two-fault overheads from exhaustive enumeration; panels (b) and (c) summarize average and maximum overhead for structured stress tests and large random tests, respectively.}
\label{fig:overhead-summary}
\end{figure}

\subsection{Interpretation}
The validation results support three points. First, the implementation realizes the proved depth guarantees on exhaustive and stress-tested finite instances. Second, the certified component-repair procedure realizes the predicted $c-1$ edge count. Third, although the proof guarantees depth $t+2$ for all two-fault placements, overhead two appears only in the small $t=3$ exhaustive case, while all larger exhaustive, structured, and random tests observed maximum overhead at most one.

\subsection{Baseline Interpretation}

\begin{table}[H]
\centering
\scriptsize
\caption{Objective-level and numerical comparison with natural baselines. Numerical rows: $t=10$, $N=331$, $F=\{(5,0),(10,0)\}$, $s=(0,0)$.}
\label{tab:baseline}
\begin{adjustbox}{max width=\textwidth}
\begin{tabular}{p{0.19\linewidth}p{0.22\linewidth}c c p{0.25\linewidth}}
\toprule
Method & Optimizes & New parent edges & Depth & Limitation \\
\midrule
Global BFS rebuild & Reachability & Up to $\Theta(N)=330$ & $t=10$ & May replace the damaged tree entirely \\
Fixed orientation $C_0$ & Component repair & $c-1=2$ & $11=t+1$ & Preselected; may not be optimal \\
Fixed orientation $A_0$ & Component repair & $c-1=1$ & $10=t$ & Not known before faults observed \\
Independent-tree redundancy & Path/tree diversity & Not applicable & Not comparable & Precomputes trees; does not minimize repair edges \\
EJ-MOEM & Best validated orientation & $c-1=1$ & $10=t$ & Depth proof covers $|F|\le2$ \\
\bottomrule
\end{tabular}
\end{adjustbox}
\end{table}

\section{Complexity}
For one orientation, constructing the coordinate-reduction tree takes $O(N)$ time. Deleting faults and computing connected components takes $O(N)$ time. Crossing edges are found by scanning the six neighbors of each healthy vertex, also $O(N)$. Component eccentricities are computed by two tree sweeps inside each component in linear time. EJ-MOEM evaluates a constant number of orientations, so the total running time is $O(N)$ for the one- and two-fault setting.

\section{Discussion}
The EJ case is not merely the Gaussian case with six directions. Gaussian coordinate balls are diamonds under the Manhattan metric, while EJ coordinate balls are hexagons described by three axial strips. At strip-intersection vertices, an EJ boundary or sector-corner cut may have two simultaneous tight strip coordinates, so one side entry is a layer-$(r+1)$ lateral vertex while the other is a layer-$r$ inward gate. The adjacent-sector pair case also has no direct degree-four Gaussian analogue.

The $t+2$ bound is a worst-case theorem. The validation run shows that the bound is attained in the smallest nontrivial two-fault case $t=3$, but no overhead-two case was observed for $t\ge4$ in the extended exhaustive run. A natural open question is whether depth $t+1$ is universally sufficient for all $t\ge4$.

\section{Conclusion}
This paper introduced EJ-MOEM, a multi-orientation edge-minimum repair method for non-redundant one-to-all broadcasting in dense Eisenstein--Jacobi networks. EJ-MOEM evaluates a constant-size family of six-direction coordinate-reduction broadcast trees, deletes faulty vertices, contracts the fault-pruned forest into healthy components, and reconnects those components using an externally edge-minimum set of repair edges. For any selected orientation with $c$ healthy components and connected component graph, exactly $c-1$ external component-crossing repair edges are necessary and sufficient. The depth theorem proves repairs of depth at most $t+1$ for one fault and $t+2$ for two faults. Extended validation confirms the proof-derived behavior with 100\% success across all tested cases.

\section*{Data and Code Availability}
The data and code that support the findings of this study are available from the corresponding author upon reasonable request.

\section*{Acknowledgment}
The author thanks the Department of Computer Science, Faculty of Science, Kuwait University, for its support and research environment. This work did not receive a specific grant from any funding agency in the public, commercial, or not-for-profit sectors.


\begin{thebibliography}{99}

\bibitem{Akers1989}
S. B. Akers and B. Krishnamurthy, ``A group-theoretic model for symmetric interconnection networks,'' \emph{IEEE Transactions on Computers}, vol. 38, no. 4, pp. 555--566, Apr. 1989.

\bibitem{Leighton1992}
F. T. Leighton, \emph{Introduction to Parallel Algorithms and Architectures: Arrays, Trees, Hypercubes}. San Mateo, CA, USA: Morgan Kaufmann, 1992.

\bibitem{DallyTowles2004}
W. J. Dally and B. Towles, \emph{Principles and Practices of Interconnection Networks}. San Francisco, CA, USA: Morgan Kaufmann, 2004.

\bibitem{AlMohammadBose1999}
B. F. A. AlMohammad and B. Bose, ``Fault-tolerant communication algorithms in toroidal networks,'' \emph{IEEE Transactions on Parallel and Distributed Systems}, vol. 10, no. 10, pp. 976--983, Oct. 1999.

\bibitem{LeeHayes1988}
T. C. Lee and J. P. Hayes, ``Routing and broadcasting in faulty hypercube computers,'' in \emph{Proc. Third Conference on Hypercube Concurrent Computers and Applications}, vol. 1, 1988, pp. 346--354.

\bibitem{LiuSong2005}
F. Liu and Y. Song, ``Broadcast in the locally $k$-subcube-connected hypercube networks with faulty tolerance,'' in \emph{Networking and Mobile Computing}, Lecture Notes in Computer Science, vol. 3619. Berlin, Germany: Springer, 2005, pp. 305--313.

\bibitem{XiangLiFu2019}
D. Xiang, B. Li, and Y. Fu, ``Fault-tolerant adaptive routing in dragonfly networks,'' \emph{IEEE Transactions on Dependable and Secure Computing}, vol. 16, no. 2, pp. 259--271, Mar./Apr. 2019.

\bibitem{HussainAboElFotohAlBdaiwi2021}
Z. Hussain, H. AboElFotoh, and B. AlBdaiwi, ``Independent spanning trees in Eisenstein--Jacobi networks,'' arXiv:2101.09797 [cs.DC], Jan. 2021.

\bibitem{MartinezGaussian2008}
C. Martinez, R. Beivide, E. Stafford, M. Moreto, and E. M. Gabidulin, ``Modeling toroidal networks with the Gaussian integers,'' \emph{IEEE Transactions on Computers}, vol. 57, no. 8, pp. 1046--1056, Aug. 2008.

\bibitem{MartinezDenseGaussian2006}
C. Martinez, E. Vallejo, R. Beivide, C. Izu, and M. Moreto, ``Dense Gaussian networks: Suitable topologies for on-chip multiprocessors,'' \emph{International Journal of Parallel Programming}, vol. 34, no. 3, pp. 193--211, 2006.

\bibitem{MartinezEJ2008}
C. Martinez, E. Stafford, R. Beivide, and E. M. Gabidulin, ``Modeling hexagonal constellations with Eisenstein--Jacobi graphs,'' \emph{Problems of Information Transmission}, vol. 44, no. 1, pp. 1--11, 2008.

\bibitem{FlahiveBose2010}
M. Flahive and B. Bose, ``The topology of Gaussian and Eisenstein--Jacobi interconnection networks,'' \emph{IEEE Transactions on Parallel and Distributed Systems}, vol. 21, no. 8, pp. 1132--1142, Aug. 2010.

\bibitem{AlbaderBoseFlahive2012}
B. A. Albader, B. Bose, and M. Flahive, ``Efficient communication algorithms in hexagonal mesh interconnection networks,'' \emph{IEEE Transactions on Parallel and Distributed Systems}, vol. 23, no. 1, pp. 69--77, Jan. 2012.

\bibitem{FlahiveBose2013}
M. Flahive and B. Bose, ``On resource placements in Gaussian and EJ networks,'' \emph{IEEE Transactions on Computers}, vol. 62, no. 3, pp. 623--626, Mar. 2013.

\bibitem{AlbaderBose2016}
B. A. Albader and B. Bose, ``Edge-disjoint Hamiltonian cycles in Gaussian networks,'' \emph{IEEE Transactions on Computers}, vol. 65, no. 1, pp. 315--321, Jan. 2016.

\bibitem{Camarero2013}
C. Camarero, C. Martinez, and R. Beivide, ``Symmetric interconnection networks from cubic crystal lattices,'' arXiv:1311.2019, 2013.

\bibitem{Camarero2015}
C. Camarero, C. Martinez, E. Vallejo, and R. Beivide, ``Projective networks: Topologies for large parallel computer systems,'' arXiv:1512.07574, 2015.

\bibitem{Kliazovich2013}
D. Kliazovich, F. Granelli, and D. Miorandi, ``A survey of fault-tolerant network-on-chip architectures,'' \emph{IEEE Communications Surveys and Tutorials}, vol. 15, no. 4, pp. 1676--1690, 2013.

\bibitem{PasrichaDutt2008}
S. Pasricha and N. Dutt, \emph{On-Chip Communication Architectures: System on Chip Interconnect}. San Francisco, CA, USA: Morgan Kaufmann, 2008.

\end{thebibliography}
\end{document}